\documentclass[11pt]{article}
\usepackage{local_style}

\title{Fundamental Limits of Throughput and Availability: \\ \Large Applications to prophet inequalities \& transaction fee mechanism design}
\author{Aadityan Ganesh \\ Princeton University \\ aadityanganesh@princeton.edu \and Jason Hartline \\ Northwestern University \\ hartline@northwestern.edu
 \and Atanu R Sinha \\ Adobe Research \\ atr@adobe.com \and Matthew vonAllmen \\ Northwestern University \\ matthewvonallmen2026@u.northwestern.edu}\date{}

\begin{document}
\maketitle

\begin{abstract}
This paper studies the fundamental limits of availability and throughput for independent and heterogeneous demands of a limited resource. Availability is the probability that the demands are below the capacity of the resource. Throughput is the expected fraction of the resource that is utilized by the demands. We offer a concentration inequality generator that gives lower bounds on feasible availability and throughput pairs with a given capacity and independent but not necessarily identical distributions of up-to-unit demands. We show that availability and throughput cannot both be poor. These bounds are analogous to tail inequalities on sums of independent random variables, but hold throughout the support of the demand distribution. This analysis gives analytically tractable bounds supporting the unit-demand characterization of \textcite{CDL-23} and generalizes to up-to-unit demands. Our bounds also provide an approach towards improved multi-unit prophet inequalities \parencite{HKS-07}. They have applications to transaction fee mechanism design (for blockchains) where high availability limits the probability of profitable user-miner coalitions \parencite{CS-23}.
\end{abstract}


\section{Introduction}

Consider the following {\em Soup Kitchen Problem}:
\begin{quote}
    A soup kitchen produces $\threshold$ units of soup, and serves diners with independent but not necessarily identically distributed demands of at most one unit. An auditor needs to ensure that the soup kitchen is performing well. Two key performance metrics are availability $\availability$ and throughput $\throughput$. Availability is the probability the kitchen does not run out of soup. Throughput is the expected fraction of soup produced that is consumed. The auditor's problem is to catch an underperforming soup kitchen by establishing a lower bound on feasible availability and throughput pairs.
\end{quote}

We define the performant set of availability-throughput pairs $(\availability,\throughput)$ for supply $\threshold$ as a subset of $[0, 1]^{2}$ for which there exist independent distributions for the diners' demands $D_{i} \in [0, 1]$ with sum $D = \sum_{i = 1}^{n} D_{i}$ that induce at most availability and throughput of $(\availability,\throughput)$, i.e., an availability of $\availability \geq \mathbb{P}(D < \threshold)$ and a throughput of $\throughput \geq \mathbb{E}[\min(D / \threshold, 1)]$. The auditor seeks to find as tight a bound as possible on performant $(\availability, \throughput)$ pairs so that underperforming pairs are not mistaken for performant pairs. An illustration of how tighter bounds permit the auditor to identify underperforming soup kitchens is provided in \Cref{fig:toy-problem-illustration}.

\begin{figure}[h]
    \center
    \includegraphics[scale=0.65]{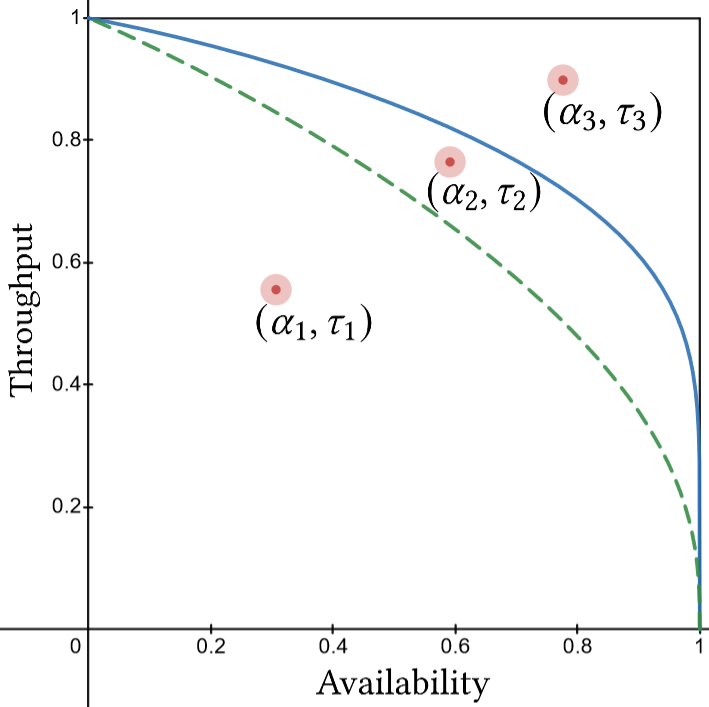}
    \caption{The solid blue line represents the true bound on performant (availability, throughput) pairs, while the dashed green line is a lower bound known to the auditor. The soup kitchen achieving $(\availability_{3}, \throughput_{3})$ above the solid blue line performs well, and the soup kitchens achieving $(\availability_{1}, \throughput_{1})$ and $(\availability_{2}, \throughput_{2})$ below the solid blue line are underperforming. However, the auditor can only detect that $(\availability_{1}, \throughput_{1})$ is underperforming, since it does not know the true location of the solid blue line.}
    \label{fig:toy-problem-illustration}
\end{figure}

Both throughput and availability are important. If a soup kitchen has low throughput, much of the soup produced each day goes to waste. However, if it has low availability, prospective diners have good reason to think that the kitchen will not have any soup to serve them, and so cannot reliably visit the kitchen and expect to be served.

This paper provides a method for generating novel concentration inequalities to relate availability and throughput. These concentration inequalities are applicable in a wide variety of domains, and improve upon similar concentration inequalities, such as the Chernoff bound, in ways that are required for a good approximate solution to the soup kitchen problem. Unlike most concentration inequalities, which can only bound the probability that a random variable falls in a region above its mean, the concentration inequalities provided in this paper can bound the probability that a random variable falls in any upward-closed region. In this sense, the concentration inequalities we provide move beyond "tail bounds" and can instead bound a larger class of regions than the tail of a distribution.

A mechanism designer that can fine-tune either the availability or the throughput can use these bounds to establish worst-case values for whichever parameter they do not fine-tune. When demands come from price-sensitive agents, the designer can increase the price to decrease throughput and increase availability, while decreasing the price does the opposite. Furthermore, since the bounds improve as the size of the soup supply $\threshold$ increases, a designer that increases their supply while targeting a specific availability can increase the value of their worst-case throughput. Alternatively, a designer that increases their supply while targeting a specific throughput can increase the value of their worst-case availability. This allows them to identify the minimal amount of soup $\threshold$ which permits them to meet their desired values for availability and throughput.

The fact that posted prices can be used to pick a particular point on the tradeoff curve between availability and throughput is particularly useful for prophet inequalities. The conventional way to prove multi-unit prophet inequalities is to demonstrate the existence of a price that achieves a point on the the tradeoff curve which maximizes the posted-price mechanism's worst-case expected welfare. Better bounds on the worst-case throughput in terms of the availability, or better bounds on the worst-case availability in terms of the throughput, translate into better bounds on expected welfare.

\subsection*{Results}
Our primary theorem is a concentration inequality generator. Given a function $f$ that satisfies certain criteria, it constructs a concentration inequality that is either stronger or weaker (and more or less tractable) depending on the function $f$ selected.

\newcommand{\maintheoremtext}{Let supply $\threshold \geq 0$, and let $D_{1}, ..., D_{n} \in [0, 1]$ be independent random demands whose sum $D = \sum_{i = 1}^{n} D_{i}$ has throughput $\throughput = \mathbb{E}[\min(D / \threshold, 1)]$ and availability $\availability = \mathbb{P}(D < \threshold)$. For any weakly convex and weakly positive function $f: \mathbb{R}_{+} \rightarrow \mathbb{R}$ that is strictly increasing at $\threshold$, the probability that $D$ is at least $\threshold$ is bounded above by the expectation of $f$ evaluated on a binomially distributed random variable with mean equal to the absolute throughput $\threshold \throughput$, normalized by $f$ evaluated at the supply $\threshold$. I.e., for $X \sim \text{Binomial}(n, \threshold \throughput / n)$,
    \begin{align*}
        1 - \availability \leq \frac{\mathbb{E}[f(X)]}{f(\threshold)} \text{.}
    \end{align*}

}

\begin{customthm}{\ref{main_theorem}}
    \maintheoremtext
\end{customthm}

Note that, unlike Chernoff bounds, the mean of the demand distribution $\mathbb{E}[D]$ does not appear in concentration inequalities generated by \Cref{main_theorem}. The only quantities relating to the demand $D$ which appear are the availability and the throughput. Furthermore, observe that \Cref{main_theorem} permits the selection of thresholds $\threshold$ which are less than the mean of the demand $D$, whereas Chernoff bounds do not. Thresholds $\threshold$ below $\mathbb{E}[D]$ permit concentration inequalities generated by \Cref{main_theorem} to bound the probability of non-tail regions of the demand $D$.

To prove \Cref{main_theorem}, we first establish that the expected value of a convex non-negative function of demand is non-decreasing as we go from general up-to-unit demand to asymmetric unit-demand to binomial (symmetric unit-demand) to Poisson demand (an infinite number of unit demands)\footnote{This last inequality between binomial demand and Poisson demand is eventually used to show that the formula in \Cref{main_theorem} is weakest when the number of demands $n \to \infty$.}. Next we show that the probability of the demand exceeding the threshold $\threshold$ can be bounded by the expected value of a non-negative convex function $f$ of a binomial distribution with the same expectation as $D$, normalized by the value of the function on the expected overdemand, i.e. $f(\mathbb{E}[D \mid D \geq \threshold])$. Finally, we show that this bound still holds when all instances of the expected overdemand $\mathbb{E}[D \mid D \geq \threshold]$ are lowered to the threshold $\kappa$, producing the formula for \Cref{main_theorem}.

A simple argument then shows that the tightest bound for \Cref{main_theorem} comes from a weakly convex function that is zero up to a point and then linear, i.e., $f(x) = \max(x - \reluparam, 0)$ for some constant $\reluparam$. In machine learning this is referred to as a ReLU function. It is the case that non-ReLU bounds, though less tight than ReLU bounds, may still be desirable due to their comparative tractability. We prove several bounds, each produced from a different function $f$, and compare them to the optimal ReLU bound, showing how different bounds trade off tractability against strength.

Lastly, we apply one of these bounds to the setting of multi-unit prophet inequalities with up-to-unit demands. Conventionally, multi-unit prophet inequalities have attempted to showcase the existence of posted prices that produce good expected welfare relative to the optimal allocation, which requires selecting posted prices that sacrifice availability for throughput. The literature on multi-unit prophet inequalities has not explored what occurs when availability is its own independent concern for the mechanism designer, and we show in \Cref{invertible_near_chernoff_bound} that one of the bounds generated by \Cref{main_theorem} can be used to illustrate a tractable yet strong bound on the tradeoff between stronger prophet inequalities and availability.

\subsection*{Characterization of Throughput versus Availability}

A consequence of the analysis of multi-unit prophet inequalities in \textcite{CDL-23} is a simple (but analytically intractable) characterization of the limits of throughput and availability for unit demands, i.e., when each individual demand $D_{i}$ is in $\{ 0, 1 \}$, showing that for any given availability $\availability$ the worst-case/smallest throughput $\throughput$ occurs when the total demand $D$ is Poisson-distributed. Therefore, there does not exist any feasible $(\availability, \throughput)$ pairs below the curve drawn out by assuming $D$ is Poisson distributed and varying the Poisson parameter $\mathbb{E}[D]$ from $\mathbb{E}[D] = 0$, where $(\availability, \throughput) = (1, 0)$, to $\mathbb{E}[D] \to \infty$, where $(\availability, \throughput) = (0, 1)$. Poisson distributions for $D$ with a mean between these two extremes will trade off between different intermediate values for throughput and availability.

As mentioned, exact analysis of Poisson tail probabilities are not analytically tractable; that is, given any particular availability, there is no closed form for the corresponding throughput of a Poisson distribution. Our analysis provides concentration inequalities that do give a closed form for worst-case throughput in terms of availability. In addition, our bounds are not restricted to the unit demand setting of multi-unit prophet inequalities. They also hold in a more general setting with up-to-unit demands, i.e. where each $D_{i}$ can take on values in the unit interval $[0, 1]$. In the setting with up-to-unit demand diners, a simple characterization of the solution to the soup kitchen problem is not known. We conjecture that the worst-case distribution in the up-to-unit-demand setting is also Poisson:

\begin{conjecture}
    For independently distributed up-to-unit demands with availability and throughput $(\availability, \throughput)$ for a given supply, there is a Poisson distribution for total demand with availability and throughput at most $(\availability, \throughput)$.
\end{conjecture}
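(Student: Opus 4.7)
The plan is to leverage the convex-order chain from the proof of \Cref{main_theorem} to handle throughput, and then to extend the \textcite{CDL-23} variational argument to up-to-unit demands to handle availability.

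For throughput, first reduce to unit demand via convex order: each up-to-unit $D_i \in [0, 1]$ with mean $\mu_i = \mathbb{E}[D_i]$ is dominated in convex order by $\text{Bern}(\mu_i)$, since convexity of $f$ gives $f(x) \leq x f(1) + (1-x) f(0)$ on $[0,1]$ and hence $\mathbb{E}[f(D_i)] \leq \mathbb{E}[f(\text{Bern}(\mu_i))]$. Since convex order is preserved under independent sums, $D \preceq_{cx} D' := \sum_i \text{Bern}(\mu_i)$; applying this to $x \mapsto (\threshold - x)^+$ yields $\throughput(D') \leq \throughput(D)$. The unit-demand characterization of \textcite{CDL-23} applied to $D'$ then produces a Poisson $P$ with $(\availability(P), \throughput(P))$ coordinate-wise at most $(\availability(D'), \throughput(D'))$, so $\throughput(P) \leq \throughput(D)$.

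The remaining task, $\availability(P) \leq \availability(D)$, is the main obstacle. Bernoullification can \emph{increase} availability — taking $n = 2$, $D_1 = D_2 = 1/2$ deterministically with $\threshold = 1$ gives $\availability(D) = 0$ while $\availability(D') = 1/4$ — so the Poisson obtained from $D'$ need not dominate $D$ in the availability coordinate. The natural workaround is to parameterize Poissons by $\lambda$: the curve $\lambda \mapsto (\availability(P_\lambda), \throughput(P_\lambda))$ is continuous and monotone from $(1, 0)$ to $(0, 1)$, and one would select $\lambda^*$ with $\throughput(P_{\lambda^*}) = \throughput(D)$, reducing the conjecture to showing $\availability(P_{\lambda^*}) \leq \availability(D)$ directly.

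This availability comparison is the crux. It would follow from a variational argument paralleling \textcite{CDL-23} for unit demand, extended to up-to-unit: iteratively perturb individual $D_i$ (endpoint-Bernoullification, inter-diner symmetrization, and Bernoulli-splitting toward a Poisson limit) while verifying that each step weakly decreases both $\availability$ and $\throughput$. Convex order controls throughput uniformly at each step, but $\availability = \mathbb{P}(D < \threshold)$ is the expectation of a non-convex indicator and is therefore \emph{not} captured by the convex-$f$ inequalities underlying \Cref{main_theorem}. Identifying perturbations — or a composite rule — under which this non-convex functional moves monotonically toward the Poisson frontier is the primary difficulty, and likely requires a coupling or combinatorial argument exploiting independence across diners; without it, the bound from \Cref{main_theorem} is strictly weaker than the conjectured Poisson frontier and the conjecture remains open.
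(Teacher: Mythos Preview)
The statement is a \emph{conjecture}: the paper does not prove it and explicitly flags it as open (see the discussion immediately preceding the conjecture and the conclusion, which states that the techniques of \Cref{main_theorem} leave a gap to the Poisson frontier and that closing it is future work). So there is no ``paper's own proof'' to compare against.

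Your proposal is not a proof either, and to your credit you say so in the final sentence. The positive content of your outline is sound: the convex-order reduction $D \preceq_{cx} D'$ with $D' = \sum_i \mathrm{Bern}(\mu_i)$ is exactly \Cref{sum_of_bernoullis}, and the deduction $\throughput(D') \le \throughput(D)$ via the convex test function $x \mapsto (\threshold - x)^+$ is correct. Your counterexample showing that Bernoullification can strictly raise availability ($D_1 = D_2 = 1/2$ deterministic, $\threshold = 1$) is valid and pinpoints precisely why the convex-order machinery underlying \Cref{numerator_inequality} and \Cref{main_theorem} cannot by itself close the conjecture: $\mathbb{P}(D < \threshold)$ is the expectation of a non-convex indicator, so convex-order monotonicity says nothing about it.

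Your proposed workaround --- slide along the Poisson curve to match $\throughput(D)$ exactly and then compare availabilities directly --- is the natural reformulation, and your diagnosis that a new (coupling or combinatorial) argument beyond convex order is needed for that step is consistent with the paper's own assessment. In short: your write-up is an accurate status report on an open problem, not a proof, and it aligns with what the paper itself claims.
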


Throughout this paper we will use availability-throughput pairs corresponding to Poisson distributions as a benchmark to compare against our concentration inequalities.

\subsection*{Applications}
The analysis presented of the soup kitchen problem is applicable in several economic and computational settings.

Consider airline ticket sales, where a clear tradeoff exists between throughput and availability \parencite{bertsimas2005simulation,boyd2016future}. Airlines intentionally overbook tickets for seats on a flight knowing that some customers will cancel or not show up~\parencite{suzuki2002empirical}. When more passengers show up than there are seats, an airline must engage in "bumping," whereby some passengers are removed from the flight, either by way of offering them a voucher for a new flight or rescheduling them for a new one later that day. An airline that tried to set its availability $\availability = 1$ will never overbook, but will waste most of the seats on a flight by frequently leaving them empty, exemplified by a low throughput value. Conversely, an airline that aggressively overbooks will achieve a high throughput, but will need to bump passengers so frequently that airline tickets are no longer seen as a reliable indicator for a passenger that they will be boarding a flight at the time printed upon their ticket. Balancing these two quantities is an important tradeoff for the airline to make, and they can control where they fall along the tradeoff curve by changing the degree of overbooking in which they engage.

Analogous tradeoffs occur commonly in cloud computing~\parencite{lee2011heterogeneity,lu2016fairness} with high fixed cost of compute and storage, and heterogeneity in demand from users. An exemplar is that of major enterprises renting costly GPUs from cloud providers to provision dedicated GPUs to internal users or products~\parencite{jeon2018multi}. Consider a contract where the enterprise rents a fixed supply of GPUs from a provider, and then in turn provisions them out to its employees on demand. Employees' requests for GPUs arrive in sequence and different requests have different values in terms of productivity for the enterprise. The enterprise would like to maximize its throughput, since it still has to pay for each of the GPUs regardless of whether an employee makes use of them. At the same time, the enterprise also desires high availability. If employees request GPUs and the enterprise consistently cannot fulfil those requests because it has exhausted its supply, it is unproductive for the enterprise. An enterprise can control where it falls on the availability-throughput tradeoff curve by making GPUs easier or harder for employees to acquire. Additionally, rather than trading off between availability and throughput, the enterprise can potentially increase both metrics by increasing the supply of GPUs $\threshold$, which improves the entire availability-throughput tradeoff curve. Since it is costly to rent GPUs from a cloud provider, an enterprise may wish to select a minimal supply $\threshold$ that still permits it to meet its desired availability and throughput values.

Another application arises from designing transaction fee mechanisms for blockchains. A block can accommodate up to $\threshold$ units of data. The block proposer packs the block with transactions performed by users. Different transactions require different quantities of data \parencite{VB-14}, and typically the total demand is more than the space available on the block. In the event this occurs, a transaction fee mechanism is deployed to determine which transactions are included in the block. A central question explored in the literature is to design a mechanism that satisfies strategy-proofness for users and block proposers, while also deterring collusion between block proposers and users \parencite{TR-20, CS-23}. Ethereum, for instance, runs a posted-price mechanism that burns all the payments collected from users \parencite{TR-20, TR-21}. Whenever the demand at the base price set by the protocol is less than the capacity of the block, the mechanism satisfies all the required properties. On the other hand, when demand exceeds the block capacity, space in the block is allocated through an emergency mechanism, which is a first-price auction in Ethereum's case. First-price auctions are not strategy-proof for users, while switching to a second-price auction introduces strategic deviations for block proposers. \textcite{CS-23} show, under some natural assumptions, that there cannot exist a mechanism that guarantees strategy-proofness and collusion resistance for all parties simultaneously. One remedy would be to instead guarantee strategy-proofness and collusion resistance with high probability \parencite[cf.][]{GH-03}. High availability corresponds to a lower probability of these periods of over-demand, thereby mitigating the need to run the "undesirable" emergency auction, while a larger throughput corresponds to efficient space utilization in the block (or a lower quantity of information for block proposers to process in the worst-case). The base price can be adjusted to trade off availability (and the strategy-proofness that comes with it) and throughput (and the efficiency that comes with it).

\subsection*{Related Work}

\textcite{HKS-07} apply a prophet inequality to posted-price mechanisms that ration a supply of $\threshold$ units of an item among $n$ agents with independent but not identically distributed values for those items. They showcase an upper bound of $1 + \sqrt{\nicefrac{8 \ln(\threshold)}{\threshold}}$ and a lower bound of $1 + \sqrt{\nicefrac{1}{512 \threshold}}$ on the approximation ratio for expected welfare achieved by a posted-price mechanism compared to a prophet's optimal allocation. Their setting assumes binary demand; either agents purchase a single unit of the item, or do not purchase at all. Additionally, the upper and lower bounds derived are only valid for high values of $\threshold$. \textcite{JJM-23} confirm the asymptotic optimality of the former bound, showing that the fraction of the expected welfare achieved by a posted-price mechanism compared to a prophet is $1 - \Theta(\sqrt{\log(\threshold) / \threshold})$.

\textcite{CDL-23} analyze the same setting and conclude that the worst-case distribution of demand is Poisson, where there are infinitely many agents who all purchase with the same infinitesimally small probability. This result enables numerical calculation of the worst case approximation factor, specifically in the small $\threshold$ case. Though it is not the main focus of their paper, their analysis that Poisson demand is the worst case for welfare also demonstrates, for our paper, that Poisson demand produces the worst case throughput for any given availability when individuals have unit demands.

Our main theoretical contribution, the development of tail bounds superior to the Chernoff bound, also applies to the special case of individuals with unit demands. This is useful because the tradeoff curve between availability and throughput of the worst-case Poisson demand distribution is not analytically tractable. In contrast, several of the bounds we develop have a closed form.

\section{The Soup Kitchen Problem}

\subsection{Model}
The soup kitchen produces $\threshold$ units of soup. The kitchen serves $n$ diners with independent but not necessarily identically distributed demands. Let diner $i$'s demand be $D_{i} \in [0, 1]$ and $D = \sum_{i = 1}^{n} D_i$ be the total demand. The availability $\availability = \mathbb{P}(D < \threshold)$ of the kitchen is the probability that the total demand can be served without running out of soup. Throughput $\throughput = \mathbb{E}[\min(D / \threshold, 1)]$ is the expected fraction of the produced soup that is consumed. Additionally, we will sometimes refer to the quantity $\threshold \throughput = \mathbb{E}[\min(D, \threshold)]$ as the absolute throughput, i.e., the expected total units of soup consumed. The goal is to generate concentration bounds on pairs of availability and throughput $(\availability, \throughput)$ that preclude the possibility of any pair below the bound being realized by independent diner demands.

\subsection{Limits of Throughput and Availability}

The following high-level sequence of results culminates in our main theorem.

\begin{theoremrep}\label{numerator_inequality}
    Let $D_{1}, ..., D_{n} \in [0, 1]$ be independent random demands with sum $D = \sum_{i = 1}^{n} D_{i}$ and collective mean $\mu = \mathbb{E}[D]$. Then, for any weakly convex function $f: \mathbb{R}_{+} \rightarrow \mathbb{R}$, the expectation of $f(D)$ is maximized when $D$ is binomially distributed, and is further bounded above when $D$ is Poisson distributed with the same mean. I.e., for $X \sim \text{Binomial}(n, \mu / n)$ and $Y \sim \text{Poisson}(\mu)$,
    \begin{align*}
        \mathbb{E}[f(D)] \leq \mathbb{E}[f(X)] \leq \mathbb{E}[f(Y)] \text{.}
    \end{align*}
\end{theoremrep}

\begin{proof}
    Proving \Cref{numerator_inequality} can be split into the above two separate steps. The first of these two steps is \Cref{sum_of_bernoullis}, proved subsequently. Consider that when we select $L = 0$, \Cref{sum_of_bernoullis} tells us that
    \begin{align*}
        \mathbb{E}[f(D)] \leq \sum_{S \subseteq [n]} f(|S|) \left[ \prod_{i \in S} \mu_{i} \right] \left[ \prod_{i \in [n] \setminus S} \left( 1 - \mu_{i} \right) \right] \text{.}
    \end{align*}
    for $\mu_{1}, ..., \mu_{n}$ denoting the means of the individual random variables $D_{1}, ..., D_{n}$.
    
    Observe that \Cref{iid_is_worst} also allows us to show
    \begin{align*}
        \sum_{S \subseteq [n]} f(|S|) \left[ \prod_{i \in S} \frac{\mu}{n} \right] \left[ \prod_{i \in [n] \setminus S} \left( 1 - \frac{\mu}{n} \right) \right] \leq \sum_{i = 0}^{\infty} f(i) \exp\left( -\mu \right) \frac{\mu^{i}}{i!} \text{.}
    \end{align*}
    To see this, let $n' \gg n$ and begin with $\mu_{i} = \mu / n$ for all $i \in [n]$ and $\mu_{i} = 0$ for all $i \in [n'] \setminus [n]$. Then, the procedure laid out in \Cref{iid_is_worst} lays out a path by which all $\mu_{i}$ converge to $\mu / n'$ and demonstrates
    \begin{align*}
        \sum_{S \subseteq [n]} f(|S|) \left[ \prod_{i \in S} \frac{\mu}{n} \right] \left[ \prod_{i \in [n] \setminus S} \left( 1 - \frac{\mu}{n} \right) \right] \leq \sum_{S \subseteq [n']} f(|S|) \left[ \prod_{i \in S} \frac{\mu}{n'} \right] \left[ \prod_{i \in [n'] \setminus S} \left( 1 - \frac{\mu}{n'} \right) \right] \text{.}
    \end{align*}
    Furthermore, observe that by the definition of a binomial distribution,
    \begin{align*}
        &\mathbin{\phantom{=}} \sum_{S \subseteq [n]} f(|S|) \left[ \prod_{i \in S} \frac{\mu}{n} \right] \left[ \prod_{i \in [n] \setminus S} \left( 1 - \frac{\mu}{n} \right) \right]
        \\ &= \sum_{i = 0}^{n} f(i) \frac{n!}{i! (n - i)!} \left( \frac{\mu}{n} \right)^{i} \left( 1 - \frac{\mu}{n} \right)^{n - i} = \mathbb{E}[f(X)]
    \end{align*}
    for $X \sim \text{Binomial}(n, \mu / n)$. Therefore, as we've demonstrated that $\mathbb{E}[f(X)]$ is weakly increasing in $n$, we have $\mathbb{E}[f(X)] \leq \mathbb{E}[f(Y)]$ for $Y \sim \text{Poisson}(\mu)$, which is the limiting case as $n \to \infty$. We conclude that \Cref{numerator_inequality} must hold, i.e. for our initial random variable $D$ that is the sum of $n$ independent $D_{i} \in [0, 1]$,
    \begin{align*}
        \mathbb{E}[f(D)] \leq \mathbb{E}[f(X)] \leq \mathbb{E}[f(Y)] \text{.}
    \end{align*}
\end{proof}

\begin{theorem}\label{premarkov_tail_bound}
    Let supply $\threshold \geq 0$, and let $D_{1}, ..., D_{n} \in [0, 1]$ be independent random demands whose sum $D = \sum_{i = 1}^{n} D_{i}$ has collective mean $\mu = \mathbb{E}[D]$ and availability $\availability = \mathbb{P}(D < \threshold)$. Then, for any weakly convex and weakly positive function $f: \mathbb{R}_{+} \rightarrow \mathbb{R}$, the probability $D$ is at least $\threshold$ is bounded above by the expectation of $f$ evaluated on a binomially distributed random variable with the same mean as $D$, normalized by $f$ evaluated at the conditional mean of $D$ given that it is at least as large as $\threshold$. I.e., for $X \sim \text{Binomial}(n, \mu / n)$,
    \begin{align*}
        1 - \availability \leq \frac{\mathbb{E}[f(X)]}{f(\mathbb{E}[D \mid D \geq \threshold])} \text{.}
    \end{align*}
\end{theorem}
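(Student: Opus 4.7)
The plan is to derive the bound via a Jensen-plus-Markov style argument and then invoke \Cref{numerator_inequality} to replace $D$ with the binomial $X$ in the numerator.

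First I would rewrite $1 - \availability = \mathbb{P}(D \geq \threshold)$ and apply Jensen's inequality conditionally: since $f$ is weakly convex on $\mathbb{R}_+$, conditioning on the event $\{D \geq \threshold\}$ gives
\begin{align*}
    \mathbb{E}[f(D) \mid D \geq \threshold] \geq f\bigl(\mathbb{E}[D \mid D \geq \threshold]\bigr).
\end{align*}
Multiplying both sides by $\mathbb{P}(D \geq \threshold)$ yields
\begin{align*}
    \mathbb{E}\bigl[f(D) \cdot \mathbf{1}[D \geq \threshold]\bigr] \geq \mathbb{P}(D \geq \threshold) \cdot f\bigl(\mathbb{E}[D \mid D \geq \threshold]\bigr).
\end{align*}

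Next I would use weak positivity of $f$ to drop the indicator on the left side, namely $\mathbb{E}[f(D)] \geq \mathbb{E}[f(D) \cdot \mathbf{1}[D \geq \threshold]]$, since $f(D) \cdot \mathbf{1}[D < \threshold] \geq 0$. Rearranging gives the intermediate Markov-style bound
\begin{align*}
    \mathbb{P}(D \geq \threshold) \leq \frac{\mathbb{E}[f(D)]}{f(\mathbb{E}[D \mid D \geq \threshold])}.
\end{align*}
(Division is justified whenever the denominator is positive; if it vanishes, the bound is trivially $\infty$.)

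Finally I would apply the first inequality of \Cref{numerator_inequality}, which states that for any weakly convex $f$ and independent $D_i \in [0,1]$ summing to $D$ with mean $\mu$, we have $\mathbb{E}[f(D)] \leq \mathbb{E}[f(X)]$ with $X \sim \text{Binomial}(n, \mu/n)$. Substituting this into the numerator of the previous display completes the proof. There is no real obstacle here since the heavy lifting has already been done in \Cref{numerator_inequality}; the argument amounts to standard Jensen plus non-negativity plumbing, and the only subtlety to flag is that weak positivity (rather than full non-negativity everywhere) is precisely what is needed to discard the complementary indicator term.
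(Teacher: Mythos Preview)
Your proposal is correct and matches the paper's proof essentially step for step: decompose $\mathbb{E}[f(D)]$ via the indicator on $\{D \geq \threshold\}$, use weak positivity to drop the complementary piece, apply conditional Jensen to get $f(\mathbb{E}[D\mid D\geq\threshold])$ in the denominator, and then invoke \Cref{numerator_inequality} to swap $D$ for the binomial $X$ in the numerator. The only cosmetic difference is ordering (the paper writes the law-of-total-expectation decomposition first and then applies Jensen and positivity, whereas you apply Jensen before dropping the indicator), but the logic is identical.
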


\begin{theoremrep}\label{main_theorem}
    \maintheoremtext
\end{theoremrep}

\begin{proof}
    We first introduce some notation. Let $\mu \triangleq \mathbb{E}[D]$, and denote the upper conditional mean $\mu^{\uparrow} \triangleq \mathbb{E}[D \mid D \geq \threshold]$ and the quantity $\mu^{\downarrow} \triangleq \mathbb{E}[D \mathds{1}(D < \threshold)]$; thus, $\mu = \mu^{\uparrow} \mathbb{P}(D \geq \threshold) + \mu^{\downarrow}$.

    We also define the subderivative $f'\left( \threshold^{+} \right) \equiv \lim_{z \downarrow \threshold} \frac{f(z) - f(\threshold)}{z - \threshold}$, which we assume to be strictly positive.

    Now, consider the mapping $x \mapsto \max(x - \reluparam, 0)$ for $\reluparam = \threshold - \frac{f(\threshold)}{f'(\threshold^{+})}$, which satisfies the requirements of \Cref{premarkov_tail_bound}. Then, explicitly writing out the expectation in the inequality provided by \Cref{premarkov_tail_bound} tells us that
    \begin{align*}
        \mathbb{P}(D \geq \threshold) &\leq \frac{\sum_{i = 0}^{n} \max(i - \reluparam, 0) \frac{n!}{i! (n - i)!} \left( \frac{\mu}{n} \right)^{i} \left( 1 - \frac{\mu}{n} \right)^{n - i}}{\mu^{\uparrow} - \reluparam} \text{,}
    \end{align*}
    where we write the denominator as $\mu^{\uparrow} - \reluparam$ rather than $\max(\mu^{\uparrow} - \reluparam, 0)$ because $\mu^{\uparrow} - \reluparam \geq 0$. We can then convert the right-hand side of the above inequality into a function of arbitrary parameters $u^{\uparrow}$ and $u^{\downarrow}$ that replace $\mu^{\uparrow}$ and $\mu^{\downarrow}$, respectively. To do this, we define the function
    \begin{align*}
        G(u^{\uparrow}, u^{\downarrow}) = \frac{\sum_{i = 0}^{n} \max(i - \reluparam, 0) \frac{n!}{i! (n - i)!} \left( \frac{u^{\uparrow} \mathbb{P}(D \geq \threshold) + u^{\downarrow}}{n} \right)^{i} \left( 1 - \frac{u^{\uparrow} \mathbb{P}(D \geq \threshold) + u^{\downarrow}}{n} \right)^{n - i}}{u^{\uparrow} - \reluparam}
    \end{align*}
    and observe that $\mathbb{P}(D \geq \threshold) \leq G(\mu^{\uparrow}, \mu^{\downarrow})$.

    We will manipulate the parameters of this function $G$ to show that $G(\mu^{\uparrow}, \mu^{\downarrow}) \leq G(\threshold, \mu^{\downarrow})$, allowing us to conclude that $\mathbb{P}(D \geq \threshold) \leq G(\threshold, \mu^{\downarrow})$. To do this, we will examine two cases: where $\reluparam \geq 0$, and where $\reluparam < 0$. In the case where $\reluparam \geq 0$, we will first showcase a continuous path in $(u^{\uparrow}, u^{\downarrow})$ space that preserves the value of $G$ and goes from $(\mu^{\uparrow}, \mu^{\downarrow})$ to $(\threshold, t)$ for some $t \leq \mu^{\downarrow}$. Then, we will apply the fact that $G(\threshold, \cdot)$ is weakly increasing to show $G(\mu^{\uparrow}, \mu^{\downarrow}) \leq G(\threshold, \mu^{\downarrow})$. In the case where $\reluparam < 0$, we will instead show directly that $G(\cdot, \mu^{\downarrow})$ is decreasing.

    Assume that $\reluparam \geq 0$. Observe that for any fixed $u^{\uparrow} \in (\reluparam, \mu^{\uparrow}]$, the function $G(u^{\uparrow}, \cdot)$ is continuous across all inputs $u^{\downarrow}$ such that
    \begin{align*}
        u^{\uparrow} \mathbb{P}(D \geq \threshold) + u^{\downarrow} \in [0, n] \text{.}
    \end{align*}
    In other words, $G(u^{\uparrow}, \cdot)$ is continuous in the closed interval $[-u^{\uparrow} \mathbb{P}(D \geq \threshold), n - u^{\uparrow} \mathbb{P}(D \geq \threshold)]$.
    
    Note that the value $u^{\downarrow} = (\mu^{\uparrow} - u^{\uparrow}) \mathbb{P}(D \geq \threshold) + \mu^{\downarrow}$ lies in this interval, which we briefly verify by checking that
    \begin{align*}
        u^{\uparrow} \mathbb{P}(D \geq \threshold) + (\mu^{\uparrow} - u^{\uparrow}) \mathbb{P}(D \geq \threshold) + \mu^{\downarrow}
        &= \mu^{\uparrow} \mathbb{P}(D \geq \threshold) + \mu^{\downarrow}
        \\ &= \mu \triangleq \mathbb{E}[D] \in [0, n] \text{.}
    \end{align*}
    It follows that since $G(u^{\uparrow}, \cdot)$ is continuous in the interval $[-u^{\uparrow} \mathbb{P}(D \geq \threshold), n - u^{\uparrow} \mathbb{P}(D \geq \threshold)]$, it is also continuous in the subinterval $[-u^{\uparrow} \mathbb{P}(D \geq \threshold), (\mu^{\uparrow} - u^{\uparrow}) \mathbb{P}(D \geq \threshold) + \mu^{\downarrow}]$.
    
    We now observe that when $u^{\downarrow}$ is equal to the endpoints of this interval, $-u^{\uparrow} \mathbb{P}(D \geq \threshold)$ and $(\mu^{\uparrow} - u^{\uparrow}) \mathbb{P}(D \geq \threshold) + \mu^{\downarrow}$, we obtain values of $G$ that are lower and higher than $G(\mu^{\uparrow}, \mu^{\downarrow})$, respectively. First, when $u^{\downarrow} = -u^{\uparrow} \mathbb{P}(D \geq \threshold)$, we have
    \begin{align*}
        G(u^{\uparrow}, -u^{\uparrow} \mathbb{P}(D \geq \threshold)) &= \frac{\sum_{i = 0}^{n} \max(i - \reluparam, 0) \frac{n!}{i! (n - i)!} \left( \frac{0}{n} \right)^{i} \left( 1 - \frac{0}{n} \right)^{n - i}}{u^{\uparrow} - \reluparam}
        \\ &= \frac{\max(0 - \reluparam, 0)}{u^{\uparrow} - \reluparam}
        \\ &= 0
        \\ &\leq G(\mu^{\uparrow}, \mu^{\downarrow})
    \end{align*}
    and
    \begin{align*}
        G(u^{\uparrow}, (\mu^{\uparrow} - u^{\uparrow}) \mathbb{P}(D \geq \threshold) + \mu^{\downarrow}) &= \frac{\sum_{i = 0}^{n} \max(i - \reluparam, 0) \frac{n!}{i! (n - i)!} \left( \frac{\mu}{n} \right)^{i} \left( 1 - \frac{\mu}{n} \right)^{n - i}}{u^{\uparrow} - \reluparam}
        \\ &\geq \frac{\sum_{i = 0}^{n} \max(i - \reluparam, 0) \frac{n!}{i! (n - i)!} \left( \frac{\mu}{n} \right)^{i} \left( 1 - \frac{\mu}{n} \right)^{n - i}}{\mu^{\uparrow} - \reluparam}
        \\ &= G(\mu^{\uparrow}, \mu^{\downarrow}) \text{.}
    \end{align*}

    Therefore, by the intermediate value theorem, there will always exist some $u^{\downarrow}$ within the interval $[-u^{\uparrow} \mathbb{P}(D \geq \threshold), (\mu^{\uparrow} - u^{\uparrow}) \mathbb{P}(D \geq \threshold) + \mu^{\downarrow}]$ such that $G(u^{\uparrow}, u^{\downarrow})$ exactly equals $G(\mu^{\uparrow}, \mu^{\downarrow})$. We will hereafter write $u^{\downarrow}$ as a function of $u^{\uparrow}$ so as to always select a $u^{\downarrow}$ where this equality holds:
    \begin{align*}
        G(u^{\uparrow}, u^{\downarrow}(u^{\uparrow})) = G(\mu^{\uparrow}, \mu^{\downarrow})
    \end{align*}
    By definition, varying $u^{\uparrow}$ keeps the left-hand side of the above equality constant. Thus, if we take the total derivative of $G$ with respect to $u^{\uparrow}$, we will always have
    \begin{align*}
        \frac{dG(u^{\uparrow}, u^{\downarrow}(u^{\uparrow}))}{du^{\uparrow}} = 0 \text{.}
    \end{align*}
    We will use this fact to demonstrate that $u^{\downarrow}(u^{\uparrow})$ is weakly increasing in $u^{\uparrow}$. To do so, we compute the total derivative as follows:
    \begin{align*}
        &\mathbin{\phantom{=}} \frac{dG(u^{\uparrow}, u^{\downarrow}(u^{\uparrow}))}{du^{\uparrow}}
        \\ &= \frac{\partial G}{\partial u^{\uparrow}}(u^{\uparrow}, u^{\downarrow}(u^{\uparrow})) + \frac{\partial G}{\partial u^{\downarrow}}(u^{\uparrow}, u^{\downarrow}(u^{\uparrow})) \frac{du^{\downarrow}(u^{\uparrow})}{du^{\uparrow}}
        \\ &= \frac{\sum_{i = 0}^{n} \max(i - \reluparam, 0) \frac{n!}{i! (n - i)!} \frac{\partial \left[ \left( \frac{u^{\uparrow} \mathbb{P}(D \geq \kappa) + u^{\downarrow}(u^{\uparrow})}{n} \right)^{i} \left( 1 - \frac{u^{\uparrow} \mathbb{P}(D \geq \kappa) + u^{\downarrow}(u^{\uparrow})}{n} \right)^{n - i} \right]}{\partial u^{\uparrow}}}{u^{\uparrow} - \reluparam}
        \\ &\mathbin{\phantom{=}} - \frac{\sum_{i = 0}^{n} \max(i - \reluparam, 0) \frac{n!}{i! (n - i)!} \left( \frac{u^{\uparrow} \mathbb{P}(D \geq \kappa) + u^{\downarrow}(u^{\uparrow})}{n} \right)^{i} \left( 1 - \frac{u^{\uparrow} \mathbb{P}(D \geq \kappa) + u^{\downarrow}(u^{\uparrow})}{n} \right)^{n - i}}{(u^{\uparrow} - \reluparam)^{2}}
        \\ &\mathbin{\phantom{=}} + \frac{\partial G}{\partial u^{\downarrow}}(u^{\uparrow}, u^{\downarrow}(u^{\uparrow})) \frac{du^{\downarrow}(u^{\uparrow})}{du^{\uparrow}}
    \end{align*}
    and since $\frac{dG(u^{\uparrow}, u^{\downarrow}(u^{\uparrow}))}{du^{\uparrow}} = 0$, we can rearrange the above equality into the form
    \begin{align*}
        &\mathbin{\phantom{=}} \frac{\partial G}{\partial u^{\downarrow}}(u^{\uparrow}, u^{\downarrow}(u^{\uparrow})) \frac{du^{\downarrow}(u^{\uparrow})}{du^{\uparrow}}
        \\ &= \frac{\sum_{i = 0}^{n} \max(i - \reluparam, 0) \frac{n!}{i! (n - i)!} \left( \frac{u^{\uparrow} \mathbb{P}(D \geq \kappa) + u^{\downarrow}(u^{\uparrow})}{n} \right)^{i} \left( 1 - \frac{u^{\uparrow} \mathbb{P}(D \geq \kappa) + u^{\downarrow}(u^{\uparrow})}{n} \right)^{n - i}}{(u^{\uparrow} - \reluparam)^{2}}
        \\ &\mathbin{\phantom{=}} - \frac{\sum_{i = 0}^{n} \max(i - \reluparam, 0) \frac{n!}{i! (n - i)!} \frac{\partial \left[ \left( \frac{u^{\uparrow} \mathbb{P}(D \geq \kappa) + u^{\downarrow}(u^{\uparrow})}{n} \right)^{i} \left( 1 - \frac{u^{\uparrow} \mathbb{P}(D \geq \kappa) + u^{\downarrow}(u^{\uparrow})}{n} \right)^{n - i} \right]}{\partial u^{\uparrow}}}{u^{\uparrow} - \reluparam} \text{.}
    \end{align*}
    Observe that the two terms on the right-hand side can be simplified, respectively, as
    \begin{align*}
        &\mathbin{\phantom{=}} \frac{\sum_{i = 0}^{n} \max(i - \reluparam, 0) \frac{n!}{i! (n - i)!} \left( \frac{u^{\uparrow} \mathbb{P}(D \geq \kappa) + u^{\downarrow}(u^{\uparrow})}{n} \right)^{i} \left( 1 - \frac{u^{\uparrow} \mathbb{P}(D \geq \kappa) + u^{\downarrow}(u^{\uparrow})}{n} \right)^{n - i}}{(u^{\uparrow} - \reluparam)^{2}}
        \\ &= \frac{1}{u^{\uparrow} - \reluparam} G(u^{\uparrow}, u^{\downarrow}(u^{\uparrow}))
        \\ &= \frac{G(\mu^{\uparrow}, \mu^{\downarrow})}{u^{\uparrow} - \reluparam}
    \end{align*}
    and
    \begin{align*}
        &\mathbin{\phantom{=}} - \frac{\sum_{i = 0}^{n} \max(i - \reluparam, 0) \frac{n!}{i! (n - i)!} \frac{\partial \left[ \left( \frac{u^{\uparrow} \mathbb{P}(D \geq \kappa) + u^{\downarrow}(u^{\uparrow})}{n} \right)^{i} \left( 1 - \frac{u^{\uparrow} \mathbb{P}(D \geq \kappa) + u^{\downarrow}(u^{\uparrow})}{n} \right)^{n - i} \right]}{\partial u^{\uparrow}}}{u^{\uparrow} - \reluparam}
        \\ &= - \frac{\sum_{i = 0}^{n} \max(i - \reluparam, 0) \frac{n!}{i! (n - i)!} \frac{\partial \left[ \left( \frac{u^{\uparrow} \mathbb{P}(D \geq \kappa) + u^{\downarrow}(u^{\uparrow})}{n} \right)^{i} \right]}{\partial u^{\uparrow}} \left( 1 - \frac{u^{\uparrow} \mathbb{P}(D \geq \kappa) + u^{\downarrow}(u^{\uparrow})}{n} \right)^{n - i}}{u^{\uparrow} - \reluparam}
        \\ &\mathbin{\phantom{=}} - \frac{\sum_{i = 0}^{n} \max(i - \reluparam, 0) \frac{n!}{i! (n - i)!} \left( \frac{u^{\uparrow} \mathbb{P}(D \geq \kappa) + u^{\downarrow}(u^{\uparrow})}{n} \right)^{i} \frac{\partial \left[ \left( 1 - \frac{u^{\uparrow} \mathbb{P}(D \geq \kappa) + u^{\downarrow}(u^{\uparrow})}{n} \right)^{n - i} \right]}{\partial u^{\uparrow}}}{u^{\uparrow} - \reluparam}
        \\ &= - \frac{\sum_{i = 0}^{n} \max(i - \reluparam, 0) \frac{n!}{i! (n - i)!} (\frac{i \mathbb{P}(D \geq \kappa)}{n}) \left( \frac{u^{\uparrow} \mathbb{P}(D \geq \kappa) + u^{\downarrow}(u^{\uparrow})}{n} \right)^{i - 1} \left( 1 - \frac{u^{\uparrow} \mathbb{P}(D \geq \kappa) + u^{\downarrow}(u^{\uparrow})}{n} \right)^{n - i}}{u^{\uparrow} - \reluparam}
        \\ &\mathbin{\phantom{=}} - \frac{\sum_{i = 0}^{n} \max(i - \reluparam, 0) \frac{n!}{i! (n - i)!} (\frac{-(n - i) \mathbb{P}(D \geq \kappa)}{n}) \left( \frac{u^{\uparrow} \mathbb{P}(D \geq \kappa) + u^{\downarrow}(u^{\uparrow})}{n} \right)^{i} \left( 1 - \frac{u^{\uparrow} \mathbb{P}(D \geq \kappa) + u^{\downarrow}(u^{\uparrow})}{n} \right)^{n - i - 1}}{u^{\uparrow} - \reluparam}
        \\ &= - \mathbb{P}(D \geq \kappa) \frac{\sum_{i = 1}^{n} \max(i - \reluparam, 0) \frac{(n - 1)!}{(i - 1)! (n - i)!} \left( \frac{u^{\uparrow} \mathbb{P}(D \geq \kappa) + u^{\downarrow}(u^{\uparrow})}{n} \right)^{i - 1} \left( 1 - \frac{u^{\uparrow} \mathbb{P}(D \geq \kappa) + u^{\downarrow}(u^{\uparrow})}{n} \right)^{n - i}}{u^{\uparrow} - \reluparam}
        \\ &\mathbin{\phantom{=}} - \mathbb{P}(D \geq \kappa) \frac{\sum_{i = 0}^{n - 1} -\max(i - \reluparam, 0) \frac{(n - 1)!}{i! (n - i - 1)!} \left( \frac{u^{\uparrow} \mathbb{P}(D \geq \kappa) + u^{\downarrow}(u^{\uparrow})}{n} \right)^{i} \left( 1 - \frac{u^{\uparrow} \mathbb{P}(D \geq \kappa) + u^{\downarrow}(u^{\uparrow})}{n} \right)^{n - i - 1}}{u^{\uparrow} - \reluparam}
        \\ &= - \frac{\mathbb{P}(D \geq \kappa)}{u^{\uparrow} - \reluparam} \sum_{i = 0}^{n - 1} \left( \max(i + 1 - \reluparam, 0) - \max(i - \reluparam, 0) \right)
        \\ & \cdot \frac{(n - 1)!}{i! (n - 1 - i)!} \left( \frac{u^{\uparrow} \mathbb{P}(D \geq \kappa) + u^{\downarrow}(u^{\uparrow})}{n} \right)^{i} \left( 1 - \frac{u^{\uparrow} \mathbb{P}(D \geq \kappa) + u^{\downarrow}(u^{\uparrow})}{n} \right)^{n - 1 - i}
        \\ &\geq - \frac{\mathbb{P}(D \geq \kappa)}{u^{\uparrow} - \reluparam} \sum_{i = 0}^{n - 1} \frac{(n - 1)!}{i! (n - 1 - i)!} \left( \frac{u^{\uparrow} \mathbb{P}(D \geq \kappa) + u^{\downarrow}(u^{\uparrow})}{n} \right)^{i} \left( 1 - \frac{u^{\uparrow} \mathbb{P}(D \geq \kappa) + u^{\downarrow}(u^{\uparrow})}{n} \right)^{n - 1 - i}
        \\ &= - \frac{\mathbb{P}(D \geq \kappa)}{u^{\uparrow} - \reluparam} \text{.}
    \end{align*}
    Thus, we have
    \begin{align*}
        \frac{\partial G}{\partial u^{\downarrow}}(u^{\uparrow}, u^{\downarrow}(u^{\uparrow})) \frac{du^{\downarrow}(u^{\uparrow})}{du^{\uparrow}} \geq \frac{G(\mu^{\uparrow}, \mu^{\downarrow})}{u^{\uparrow} - \reluparam} - \frac{\mathbb{P}(D \geq \kappa)}{u^{\uparrow} - \reluparam} \geq 0 \text{.}
    \end{align*}
    Lastly, note that $G(u^{\uparrow}, \cdot)$ is a weakly increasing function, as the numerator of
    \begin{align*}
        G(u^{\uparrow}, u^{\downarrow}) = \frac{\sum_{i = 0}^{n} \max(i - \reluparam, 0) \frac{n!}{i! (n - i)!} \left( \frac{u^{\uparrow} \mathbb{P}(D \geq \kappa) + u^{\downarrow}}{n} \right)^{i} \left( 1 - \frac{u^{\uparrow} \mathbb{P}(D \geq \kappa) + u^{\downarrow}}{n} \right)^{n - i}}{u^{\uparrow} - \reluparam}
    \end{align*}
    is an expectation of a weakly increasing function with respect to a binomial distribution with mean $u^{\uparrow} \mathbb{P}(D \geq \kappa) + u^{\downarrow}$, and binomial distributions with higher means first-order stochastically dominate binomial distributions with lower means. Thus, $\frac{\partial G}{\partial u^{\downarrow}}(u^{\uparrow}, u^{\downarrow}(u^{\uparrow})) \geq 0$, and this fact together with
    \begin{align*}
        \frac{\partial G}{\partial u^{\downarrow}}(u^{\uparrow}, u^{\downarrow}(u^{\uparrow})) \frac{du^{\downarrow}(u^{\uparrow})}{du^{\uparrow}} \geq 0
    \end{align*}
    lets us conclude that
    \begin{align*}
         \frac{du^{\downarrow}(u^{\uparrow})}{du^{\uparrow}} \geq 0 \text{.}
    \end{align*}
    Since this result holds so long as we selected a $u^{\uparrow} \in (\reluparam, \mu^{\uparrow}]$, and $\kappa \geq \kappa - \frac{f(\kappa)}{f(\kappa^{+})} = \reluparam$, then as we move downward from $u^{\uparrow} = \mu^{\uparrow}$ to $u^{\uparrow} \to \kappa$ we obtain
    \begin{align*}
        u^{\downarrow}(\mu^{\uparrow}) \geq u^{\downarrow}(u^{\uparrow})
    \end{align*}
    for all $u^{\uparrow} \in (\kappa, \mu^{\uparrow}]$. Recall additionally that $u^{\downarrow}(u^{\uparrow}) \in [-u^{\uparrow} \mathbb{P}(D \geq \kappa), (\mu^{\uparrow} - u^{\uparrow}) \mathbb{P}(D \geq \kappa) + \mu^{\downarrow}]$, which provides an upper bound on $u^{\downarrow}(\mu^{\uparrow})$:
    \begin{align*}
        u^{\downarrow}(\mu^{\uparrow}) \leq (\mu^{\uparrow} - \mu^{\uparrow}) \mathbb{P}(D \geq \kappa) + \mu^{\downarrow} = \mu^{\downarrow}
    \end{align*}
    Hence, $u^{\downarrow}(u^{\uparrow}) \leq \mu^{\downarrow}$ for all $u^{\uparrow} \in (\kappa, \mu^{\uparrow}]$. Furthermore, from the facts that $G(u^{\uparrow}, u^{\downarrow}(u^{\uparrow})) = G(\mu^{\uparrow}, \mu^{\downarrow})$ for all $u^{\uparrow} \in (\kappa, \mu^{\uparrow}]$ and that $G(u^{\uparrow}, \cdot)$ is weakly increasing, we can deduce that
    \begin{align*}
        \mathbb{P}(D \geq \kappa) &\leq \lim_{z \downarrow \kappa} G(z, u^{\downarrow}(z))
        \\ &\leq \lim_{z \downarrow \kappa} G(z, \mu^{\downarrow})
        \\ &= \frac{\sum_{i = 0}^{n} \max(i - \reluparam, 0) \frac{n!}{i! (n - i)!} \left( \frac{\kappa \throughput}{n} \right)^{i} \left( 1 - \frac{\kappa \throughput}{n} \right)^{n - i}}{\kappa - \reluparam}
    \end{align*}
    with the bound becoming infinitely weak when $\threshold = \reluparam$ and $\throughput > 0$. We can observe that the right-hand side can be written as an expectation with respect to a binomial distribution with mean $\kappa \throughput$, and so we have
    \begin{align*}
        \mathbb{P}(D \geq \kappa) \leq \frac{\mathbb{E}[\max(X - \reluparam, 0)]}{\kappa - \reluparam}
    \end{align*}
    for $X \sim \text{Binomial}(n, \kappa \throughput \frac{1}{n})$.

    We now examine the case where $\reluparam < 0$. Let us fix the parameter $u^{\downarrow} = \mu^{\downarrow}$ and observe what happens as we vary the parameter $u^{\uparrow}$. We first note that since $\mu^{\downarrow} \geq 0$, the inequalities
    \begin{align*}
        \mu \geq u^{\uparrow} \mathbb{P}(D \geq \kappa) + \mu^{\downarrow} \geq 0
    \end{align*}
    hold for any $u^{\uparrow} \in [0, \mu^{\uparrow}]$, and so the summation
    \begin{align*}
        \sum_{i = 0}^{n} \max(i - \reluparam, 0) \frac{n!}{i! (n - i)!} \left( \frac{u^{\uparrow} \mathbb{P}(D \geq \kappa) + \mu^{\downarrow}}{n} \right)^{i} \left( 1 - \frac{u^{\uparrow} \mathbb{P}(D \geq \kappa) + \mu^{\downarrow}}{n} \right)^{n - i}
    \end{align*}
    in the numerator of $G(u^{\uparrow}, \mu^{\downarrow})$ is an expectation of $\max(i - \reluparam, 0)$ with respect to a binomial distribution with mean $u^{\uparrow} \mathbb{P}(D \geq \kappa) + \mu^{\downarrow}$. Additionally, if $\reluparam < 0$, then $\max(i - \reluparam, 0) = i - \reluparam$ for all $i \in \{ 0, 1, ..., n \}$. Thus, we compute that
    \begin{align*}
        G(u^{\uparrow}, \mu^{\downarrow}) &= \frac{\sum_{i = 0}^{n} \max(i - \reluparam, 0) \frac{n!}{i! (n - i)!} \left( \frac{u^{\uparrow} \mathbb{P}(D \geq \kappa) + \mu^{\downarrow}}{n} \right)^{i} \left( 1 - \frac{u^{\uparrow} \mathbb{P}(D \geq \kappa) + \mu^{\downarrow}}{n} \right)^{n - i}}{u^{\uparrow} - \reluparam}
        \\ &= \frac{\sum_{i = 0}^{n} (i - \reluparam) \frac{n!}{i! (n - i)!} \left( \frac{u^{\uparrow} \mathbb{P}(D \geq \kappa) + \mu^{\downarrow}}{n} \right)^{i} \left( 1 - \frac{u^{\uparrow} \mathbb{P}(D \geq \kappa) + \mu^{\downarrow}}{n} \right)^{n - i}}{u^{\uparrow} - \reluparam}
        \\ &= \frac{u^{\uparrow} \mathbb{P}(D \geq \kappa) + \mu^{\downarrow} - \reluparam}{u^{\uparrow} - \reluparam} \text{.}
    \end{align*}
    Taking the partial derivative of $G$ with respect to $u^{\uparrow}$, we obtain
    \begin{align*}
        \frac{\partial G}{\partial u^{\uparrow}}(u^{\uparrow}, \mu^{\downarrow}) &= \frac{\mathbb{P}(D \geq \kappa)}{u^{\uparrow} - \reluparam} - \frac{u^{\uparrow} \mathbb{P}(D \geq \kappa) + \mu^{\downarrow} - \reluparam}{(u^{\uparrow} - \reluparam)^{2}}
        \\ &= \frac{u^{\uparrow} \mathbb{P}(D \geq \kappa) - \reluparam \mathbb{P}(D \geq \kappa) - u^{\uparrow} \mathbb{P}(D \geq \kappa) - \mu^{\downarrow} + \reluparam}{(u^{\uparrow} - \reluparam)^{2}}
        \\ &= \frac{\reluparam \mathbb{P}(D < \kappa) - \mu^{\downarrow}}{(u^{\uparrow} - \reluparam)^{2}} \leq 0 \text{.}
    \end{align*}
    Thus, when $\reluparam < 0$, then for any $u^{\uparrow} \in [0, \mu^{\uparrow}]$ we have $\frac{\partial G}{\partial u^{\uparrow}}(u^{\uparrow}, \mu^{\downarrow}) \leq 0$. Note also that the expressions we derived for $G(u^{\uparrow}, \mu^{\downarrow})$ and $\frac{\partial G}{\partial u^{\uparrow}}(u^{\uparrow}, \mu^{\downarrow})$ were continuous in $u^{\uparrow}$ for all $u^{\uparrow} \in [0, \mu^{\uparrow}]$. Thus, moving from $u^{\uparrow} = \mu^{\uparrow}$ to $u^{\uparrow} = \kappa$ will increase $G(\cdot, \mu^{\downarrow})$, as $\kappa \leq \mu^{\uparrow}$ and $\kappa \in [0, \mu^{\uparrow}]$. We conclude that
    \begin{align*}
        G(\mu^{\uparrow}, \mu^{\downarrow}) &\leq G(\kappa, \mu^{\downarrow})
        \\ &= \frac{\mathbb{E}[\max(X - \reluparam, 0)]}{\kappa - \reluparam}
    \end{align*}
    for $X \sim \text{Binomial}(n, \kappa \throughput \frac{1}{n})$, and since $\mathbb{P}(D \geq \kappa) \leq G(\mu^{\uparrow}, \mu^{\downarrow}) \leq G(\kappa, \mu^{\downarrow})$, we have
    \begin{align*}
        \mathbb{P}(D \geq \kappa) \leq \frac{\mathbb{E}[\max(X - \reluparam, 0)]}{\kappa - \reluparam}
    \end{align*}
    in the case where $\reluparam < 0$ as well as $\reluparam \geq 0$.

    For the final step of the proof, recall that $\reluparam = \threshold - \frac{f(\threshold)}{f'(\threshold^{+})}$, which implies $\kappa - \reluparam \geq 0$, and so we have $\kappa - \reluparam = \max(\kappa - \reluparam, 0)$. This allows us to directly apply \Cref{relu_is_better} to show that
    \begin{align*}
        \frac{\mathbb{E}[\max(X - \reluparam, 0)]}{\max(\kappa - \reluparam, 0)} \leq \frac{\mathbb{E}[f(X)]}{f(\kappa)} \text{.}
    \end{align*}
    Chaining the previous two inequalities together allows us to conclude that
    \begin{align*}
        \mathbb{P}(D \geq \kappa) \leq \frac{\mathbb{E}[f(X)]}{f(\kappa)} \text{.}
    \end{align*}
\end{proof}

Some functions $f$ in the above theorem will produce stronger bounds, while others will produce weaker bounds. The following lemma can be used to narrow the search space for functions $f$ that produce strong bounds:

\begin{lemma}\label{relu_is_better}
    Let $f: \mathbb{R} \rightarrow \mathbb{R}$ be weakly convex and weakly positive. Then, for any random variable $D$ and for any threshold $\threshold \geq 0$ such that $f$ is increasing past $\threshold$, there exists a scaled ReLU function whose expectation is at least as small as the expectation of $f(D)$ normalized by $f(\threshold)$. More formally, we have
    \begin{align*}
        \frac{\mathbb{E}[\max(D - \reluparam, 0)]}{\max(\threshold - \reluparam, 0)} \leq \frac{\mathbb{E}[f(D)]}{f(\threshold)}
    \end{align*}
    for $\reluparam \equiv \threshold - \frac{f(\threshold)}{f'(\threshold^{+})}$, where we define the subderivative $f'\left( \threshold^{+} \right) \equiv \lim_{z \downarrow \threshold} \frac{f(z) - f(\threshold)}{z - \threshold}$.
\end{lemma}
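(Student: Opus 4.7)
The plan is to interpret the constant $\reluparam = \threshold - f(\threshold)/f'(\threshold^{+})$ geometrically and exploit convexity pointwise before taking expectations. Concretely, I would introduce the linear function $L(x) := f'(\threshold^{+})\,(x - \reluparam)$. By construction $L(\threshold) = f(\threshold)$, and $L$ has slope equal to the right subderivative of $f$ at $\threshold$, so $L$ is the right-sided supporting hyperplane of the convex function $f$ at the point $\threshold$. The supporting-hyperplane characterization of convex functions then gives $f(x) \geq L(x)$ for every $x$ in the domain.

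Next I would combine this supporting-line inequality with weak positivity $f(x) \geq 0$ to strengthen it to a ReLU bound. Since $f'(\threshold^{+}) > 0$ (because $f$ is increasing past $\threshold$), we can pull the positive constant outside the max and obtain the pointwise inequality
\[
f(x) \;\geq\; \max\bigl(L(x),\,0\bigr) \;=\; f'(\threshold^{+})\,\max(x - \reluparam,\,0).
\]
Applying this with $x = D$ and taking expectations yields $\mathbb{E}[f(D)] \geq f'(\threshold^{+})\,\mathbb{E}[\max(D - \reluparam, 0)]$, which already encodes the comparison between the two numerators up to the factor $f'(\threshold^{+})$.

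To finish, I would simply rearrange. Since $f(\threshold) \geq 0$ and $f'(\threshold^{+}) > 0$, we have $\threshold - \reluparam = f(\threshold)/f'(\threshold^{+}) \geq 0$, so $\max(\threshold - \reluparam, 0) = \threshold - \reluparam$. Dividing the expectation inequality by the positive quantity $f(\threshold)\,(\threshold - \reluparam) = f(\threshold)^{2}/f'(\threshold^{+})$ converts it immediately into the ratio form claimed in the lemma. The only real subtlety is the degenerate case $f(\threshold) = 0$, in which both denominators vanish and the statement must be read in the limiting sense; I would handle this as a brief separate remark rather than as part of the main argument, since the supporting-hyperplane inequality itself remains valid and the claim then becomes a vacuous $\infty \leq \infty$.
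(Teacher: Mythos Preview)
Your proposal is correct and is essentially the same argument as the paper's. The paper's proof is a one-sentence geometric observation---that $x \mapsto \max(x-\reluparam,0)/\max(\threshold-\reluparam,0)$ is the tangent line to $f(x)/f(\threshold)$ at $\threshold$, clipped at zero, and therefore lies weakly below the convex nonnegative function $f(x)/f(\threshold)$ everywhere---while you spell out exactly the same supporting-hyperplane-plus-nonnegativity reasoning in algebraic detail before taking expectations.
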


\begin{proof}
    Observe that $x \mapsto \frac{\max(x - \reluparam, 0)}{\max(\threshold - \reluparam, 0)}$ is the scaled ReLU function which is tangent to $f(x) / f(\kappa)$ at $\threshold$. Note that this scaled ReLU is equal to $f(x) / f(\kappa)$ at $\kappa$ and weakly smaller than $f(x) / f(\kappa)$ elsewhere. See \Cref{fig:relu-is-better-sketch} for an illustration.
\end{proof}

\begin{figure}[h]
    \center
    \includegraphics[scale=0.4]{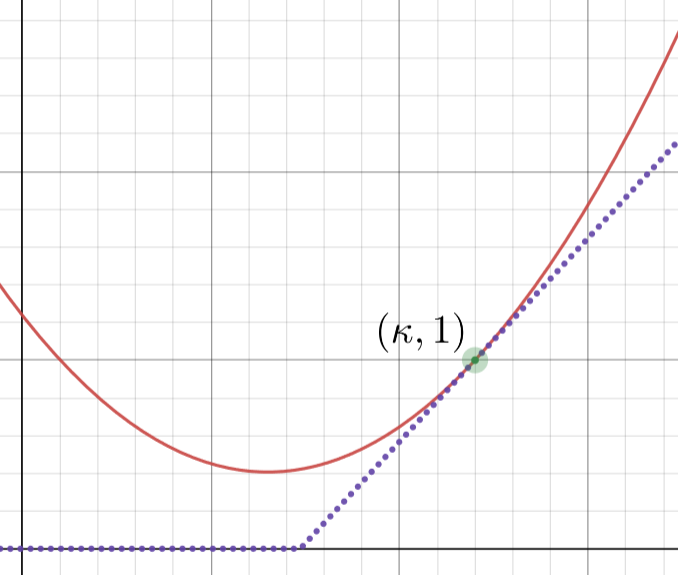}
    \caption{The solid red line is an example $\frac{f(x)}{f(\threshold)}$, and the dotted blue line is $\frac{\max(x - \reluparam, 0)}{\max(\threshold - \reluparam, 0)}$.}
    \label{fig:relu-is-better-sketch}
\end{figure}

\subsection{Analysis of Throughput and Availability}
This section sketches the proofs of \Cref{numerator_inequality}, \Cref{premarkov_tail_bound}, and \Cref{main_theorem}. Full proofs can be found in the appendix.

One way to conceptualize \Cref{numerator_inequality} is that it demonstrates a set of upper bounds on $\mathbb{E}[f(D)]$ that are tight when
\begin{enumerate}
    \item $D_{1}, ..., D_{n} \in [0, 1]$ place all their probability mass on $\{ 0, 1 \}$,
    \item $D_{1}, ..., D_{n}$ all allocate the same amount of probability mass to $0$ and to $1$, and
    \item the number $n$ is sent to infinity.
\end{enumerate}

To prove \Cref{numerator_inequality}, we first show the following lemma:

\begin{lemmarep}\label{sum_of_bernoullis}
    Let $f: \mathbb{R}_{+} \rightarrow \mathbb{R}$ be a weakly convex function, and let $D_{1}, ..., D_{n} \in [0, 1]$ be independent random variables. Additionally, let $Z_{1}, ..., Z_{n}$ be independently distributed Bernoulli random variables such that $\mathbb{E}[Z_{i}] = \mathbb{E}[D_{i}]$. Then, for all weakly positive translations $L \geq 0$,
    \begin{align*}
        \mathbb{E}[f(L + D)] \leq \mathbb{E}[f(L + Z)]
    \end{align*}
    where $D = \sum_{i = 1}^{n} D_{i}$ and $Z = \sum_{i = 1}^{n} Z_{i}$.
\end{lemmarep}

\begin{proof}
    As shorthand let $\mu_{1}, ..., \mu_{n}$ denote the means of $D_{1}, ..., D_{n}$. We will show that
    \begin{align*}
        \mathbb{E}[f(L + D)] \leq \sum_{S \subseteq [n]} f(L + |S|) \left[ \prod_{i \in S} \mu_{i} \right] \left[ \prod_{i \in [n] \setminus S} \left( 1 - \mu_{i} \right) \right] \text{.}
    \end{align*}
    by induction on $n$. Since the right hand side of the above is the formula for the the expectation $\mathbb{E}[f(L + Z)]$, this will complete the lemma. We will also demonstrate that the inequality in this lemma is tight when $D_{1}, ..., D_{n}$ are Bernoulli distributed.
    
    First, consider the base case where $n = 1$, and let any $L \geq 0$ be given. Note that since $n = 1$, we have $D = D_{1}$ and so $D \in [0, 1]$. Furthermore, note that if the mapping $x \mapsto f(x)$ is convex over the domain of non-negative real numbers, then the translated mapping $x \mapsto f(L + x)$ is also convex over the domain of non-negative real numbers. From these two facts, we then compute that
    \begin{align*}
        &\mathbin{\phantom{=}} \mathbb{E}[f(L + D)]
        \\ &= \mathbb{E}\left[ f\left( L + D + \left( 1 - D \right) \cdot 0 \right) \right]
        \\ &\leq \mathbb{E}\left[ D f\left( L + 1 \right) + \left( 1 - D \right) f\left( L + 0 \right) \right]
        \\ &= \mu_{1}f(L + 1) + \left( 1 - \mu_{1} \right) f(L)
        \\ &= \sum_{S \subseteq [1]} f(L + |S|) \left[ \prod_{i \in S} \mu_{i} \right] \left[ \prod_{i \in [1] \setminus S} \left( 1 - \mu_{i} \right) \right] \text{.}
    \end{align*}
    Note that the one inequality in this series of steps is tight when $D_{1} \in \{ 0, 1 \}$ w.p. $1$, which for fixed mean $\mathbb{E}[D_{1}] = \mu_{1}$ occurs if and only if $D_{1} \sim \text{Bernoulli}(\mu_{1})$.

    Now, for the inductive step, let us assume that for all $L \geq 0$,
    \begin{align*}
        \mathbb{E}\left[ f\left( L + \sum_{i = 1}^{n} D_{i} \right) \right] \leq \sum_{S \subseteq [n]} f(L + |S|) \left[ \prod_{i \in S} \mu_{i} \right] \left[ \prod_{i \in [n] \setminus S} \left( 1 - \mu_{i} \right) \right]
    \end{align*}
    and that this inequality is tight when all random variables in the set $\{ D_{i} \}_{i = 1}^{n}$ are Bernoulli distributed. We will demonstrate that
    \begin{align*}
        \mathbb{E}\left[ f\left( L + \sum_{i = 1}^{n + 1} D_{i} \right) \right] \leq \sum_{S \subseteq [n + 1]} f(L + |S|) \left[ \prod_{i \in S} \mu_{i} \right] \left[ \prod_{i \in [n + 1] \setminus S} \left( 1 - \mu_{i} \right) \right]
    \end{align*}
    for all $L \geq 0$ and this inequality is also tight when the random variables in the set $\{ D_{i} \}_{i = 1}^{n + 1}$ are Bernoulli distributed.
    
    First, let an arbitrary $L \geq 0$ be given. Using the law of iterated expectations, we have that
    \begin{align*}
        \mathbb{E}\left[ f\left( L + \sum_{i = 1}^{n + 1} D_{i} \right) \right] = \mathbb{E}\left[ \mathbb{E}\left[ f\left( L + D_{n + 1} + \sum_{i = 1}^{n} D_{i} \right) \mid D_{n + 1} \right] \right] \text{.}
    \end{align*}
    Since $D_{1}, ..., D_{n + 1}$ are independent, $L + D_{n + 1}$ can be treated as the translation in our inductive hypothesis, i.e.
    \begin{align*}
        &\mathbin{\phantom{=}} \mathbb{E}\left[ f\left( L + D_{n + 1} + \sum_{i = 1}^{n} D_{i} \right) \mid D_{n + 1} \right]
        \\ &\leq \sum_{S \subseteq [n]} f(L + D_{n + 1} + |S|) \left[ \prod_{i \in S} \mu_{i} \right] \left[ \prod_{i \in [n] \setminus S} \left( 1 - \mu_{i} \right) \right]
    \end{align*}
    and recall that by our inductive hypothesis, this inequality is tight when $\{ D_{i} \}_{i = 1}^{n}$ are Bernoulli distributed. Inserting this result into our earlier nested expectation, we have
    \begin{align*}
        &\mathbin{\phantom{=}} \mathbb{E}\left[ f\left( L + \sum_{i = 1}^{n + 1} D_{i} \right) \right]
        \\ &= \mathbb{E}\left[ \mathbb{E}\left[ f\left( L + D_{n + 1} + \sum_{i = 1}^{n} D_{i} \right) \mid D_{n + 1} \right] \right]
        \\ &\leq \mathbb{E}\left[ \sum_{S \subseteq [n]} f(L + D_{n + 1} + |S|) \left[ \prod_{i \in S} \mu_{i} \right] \left[ \prod_{i \in [n] \setminus S} \left( 1 - \mu_{i} \right) \right] \right]
        \\ &= \sum_{S \subseteq [n]} \mathbb{E}[f(L + D_{n + 1} + |S|)] \left[ \prod_{i \in S} \mu_{i} \right] \left[ \prod_{i \in [n] \setminus S} \left( 1 - \mu_{i} \right) \right]
        \\ &\leq \sum_{S \subseteq [n]} \left[ \mu_{n + 1} f(L + (|S| + 1)) + \left( 1 - \mu_{n + 1} \right) f(L + |S|) \right] \left[ \prod_{i \in S} \mu_{i} \right] \left[ \prod_{i \in [n] \setminus S} \left( 1 - \mu_{i} \right) \right]
        \\ &= \sum_{S \subseteq [n]} f(L + (|S| + 1)) \left[ \prod_{i \in S \cup \{ n + 1 \}} \mu_{i} \right] \left[ \prod_{i \in [n] \setminus S} \left( 1 - \mu_{i} \right) \right]
        \\ &\mathbin{\phantom{=}} + \sum_{S \subseteq [n]} f(L + |S|) \left[ \prod_{i \in S} \mu_{i} \right] \left[ \prod_{i \in [n + 1] \setminus S} \left( 1 - \mu_{i} \right) \right] \text{.}
        \intertext{Note that the second inequality on the lines above is tight when $D_{n + 1} \sim \text{Bernoulli}(\mu_{n + 1})$. Furthermore, the sums on the last line can be rewritten as}
        &\mathbin{\phantom{=}} \sum_{S \subseteq [n]} f(L + |S \cup \{ n + 1 \}|) \left[ \prod_{i \in S \cup \{ n + 1 \}} \mu_{i} \right] \left[ \prod_{i \in [n + 1] \setminus (S \cup \{ n + 1 \})} \left( 1 - \mu_{i} \right) \right]
        \\ &\mathbin{\phantom{=}} + \sum_{S \subseteq [n]} f(L + |S|) \left[ \prod_{i \in S} \mu_{i} \right] \left[ \prod_{i \in [n + 1] \setminus S} \left( 1 - \mu_{i} \right) \right]
        \\ &= \sum_{S \subseteq [n + 1]} \mathds{1}(n + 1 \in S) f(L + |S|) \left[ \prod_{i \in S} \mu_{i} \right] \left[ \prod_{i \in [n + 1] \setminus S} \left( 1 - \mu_{i} \right) \right]
        \\ &\mathbin{\phantom{=}} + \sum_{S \subseteq [n + 1]} \mathds{1}(n + 1 \notin S) f(L + |S|) \left[ \prod_{i \in S} \mu_{i} \right] \left[ \prod_{i \in [n + 1] \setminus S} \left( 1 - \mu_{i} \right) \right]
        \\ &= \sum_{S \subseteq [n + 1]} f(L + |S|) \left[ \prod_{i \in S} \mu_{i} \right] \left[ \prod_{i \in [n + 1] \setminus S} \left( 1 - \mu_{i} \right) \right] \text{.}
    \end{align*}
    This concludes the inductive step.
\end{proof}

When $L$ is set equal to $0$ in \Cref{sum_of_bernoullis}, it shows that the expectation of $f(D)$ is maximized when all $D_{i}$s are Bernoulli distributed random variables. However, to show \Cref{numerator_inequality}, we must show that it is further maximized when all such Bernoulli random variables are identically distributed, i.e. have the same mean. We demonstrate this in \Cref{iid_is_worst}.

\begin{lemmarep}\label{iid_is_worst}
    Let $f: \mathbb{R}_{+} \rightarrow \mathbb{R}$ be a weakly convex function, and let $D_{1}, ..., D_{n} \in \{ 0, 1 \}$ be independently distributed Bernoulli random variables with means $\mu_{i} = \mathbb{E}[D_{i}]$. Denote their sums $D = \sum_{i = 1}^{n} D_{i}$ and $\mu = \sum_{i = 1}^{n} \mu_{i}$. Then,
    \begin{align*}
        \mathbb{E}[f(D)] \leq \mathbb{E}[f(X)]
    \end{align*}
    for $X \sim \text{Binomial}(n, \mu / n)$. This inequality is tight when $D_{1}, ..., D_{n}$ are identically distributed, meaning $D \sim X$.
\end{lemmarep}

\begin{proof}
    We begin by introducing some notation. Let $\vec{\mu} \triangleq (\mu_{1}, ..., \mu_{n})$ and let
    \begin{align*}
        G(\vec{\mu}) \triangleq \sum_{S \subseteq [n]} f(|S|) \left[ \prod_{i \in S} \mu_{i} \right] \left[ \prod_{i \in [n] \setminus S} \left( 1 - \mu_{i} \right) \right] \text{.}
    \end{align*}
    
    Consider that the $2^{n}$ sets $S \subseteq [n]$ can be grouped into $2^{n - 1}$ pairs by way of the following rule: fixing a $k \in [n]$, two sets $S, S'$ are in a pair if they contain identical elements except that $k \in S$ and $k \notin S'$. We can then rewrite $G(\vec{\mu})$ to sum over these pairs $(S, S \setminus \{ k \})$ for all $S \subseteq [n]$ where $k \in S$:
    \begin{align*}
        G(\vec{\mu}) &= \sum_{S \subseteq [n]} f(|S|) \left[ \prod_{i \in S} \mu_{i} \right] \left[ \prod_{i \in [n] \setminus S} \left( 1 - \mu_{i} \right) \right]
        \\ &= \sum_{S \subseteq [n]} \mathds{1}(k \in S) f(|S|) \left[ \prod_{i \in S} \mu_{i} \right] \left[ \prod_{i \in [n] \setminus S} \left( 1 - \mu_{i} \right) \right]
        \\ &\mathbin{\phantom{=}} + \mathds{1}(k \in S) f(|S \setminus \{ k \}|) \left[ \prod_{i \in S \setminus \{ k \}} \mu_{i} \right] \left[ \prod_{i \in [n] \setminus (S \setminus \{ k \})} \left( 1 - \mu_{i} \right) \right]
        \\ &= \sum_{S \subseteq [n]} \mathds{1}(k \in S) f(|S|) \mu_{k} \left[ \prod_{i \in S \setminus \{ k \}} \mu_{i} \right] \left[ \prod_{i \in [n] \setminus S} \left( 1 - \mu_{i} \right) \right]
        \\ &\mathbin{\phantom{=}} + \mathds{1}(k \in S) f(|S| - 1) \left( 1 - \mu_{k} \right) \left[ \prod_{i \in S \setminus \{ k \}} \mu_{i} \right] \left[ \prod_{i \in [n] \setminus S} \left( 1 - \mu_{i} \right) \right]
        \\ &= \sum_{S \subseteq [n]} \mathds{1}(k \in S) \left( \mu_{k} (f(|S|) - f(|S| - 1)) + f(|S| - 1) \right) \left[ \prod_{i \in S \setminus \{ k \}} \mu_{i} \right] \left[ \prod_{i \in [n] \setminus S} \left( 1 - \mu_{i} \right) \right]
    \end{align*}
    
    Rewriting $G(\vec{\mu})$ this way permits us to easily compute its partial derivative with respect to $\mu_{k}$:
    \begin{align*}
        \frac{\partial G}{\partial \mu_{k}}(\vec{\mu}) &= \sum_{S \subseteq [n]} \mathds{1}(k \in S) \left( f(|S|) - f(|S| - 1) \right) \left[ \prod_{i \in S \setminus \{ k \}} \mu_{i} \right] \left[ \prod_{i \in [n] \setminus S} \left( 1 - \mu_{i} \right) \right]
    \end{align*}

    As shorthand, let
    \begin{align*}
        \Delta f(|S|) \triangleq f(|S|) - f(|S| - 1)
    \end{align*}
    and observe that since $f$ is weakly concave, $\Delta f(\cdot)$ is weakly increasing. Furthermore, we observe that $\frac{\partial G}{\partial \mu_{k}}(\vec{\mu})$ can be written as a conditional expectation: if we define $n$ independent random variables $X_{1, \mu_{1}}, ..., X_{n, \mu_{n}}$ such that $X_{i, \mu_{i}} \sim \text{Bernoulli}(\mu_{i})$, then an equivalent expression for $\frac{\partial G}{\partial \mu_{k}}(\vec{\mu})$ is
    \begin{align*}
        \frac{\partial G}{\partial \mu_{k}}(\vec{\mu}) = \mathbb{E}\left[ \Delta f\left( \sum_{i = 1}^{n} X_{i, \mu_{i}} \right) \mid X_{k, \mu_{k}} = 1 \right] = \mathbb{E}\left[ \Delta f\left( 1 + \sum_{i \neq k} X_{i, \mu_{i}} \right) \right] \text{.}
    \end{align*}
    We can then compare sizes of different partial derivatives of $G$ by comparing these expectations. For example, selecting $k, \ell \in [n]$ such that $k \neq \ell$, we have
    \begin{align*}
        &\mathbin{\phantom{=}} \frac{\partial G}{\partial \mu_{k}}(\vec{\mu}) - \frac{\partial G}{\partial \mu_{\ell}}(\vec{\mu})
        \\ &= \mathbb{E}\left[ \Delta f\left( 1 + \sum_{i \neq k} X_{i, \mu_{i}} \right) \right] - \mathbb{E}\left[ \Delta f\left( 1 + \sum_{i \neq \ell} X_{i, \mu_{i}} \right) \right]
        \\ &= \mathbb{E}\left[ \Delta f\left( 1 + \sum_{i \neq k} X_{i, \mu_{i}} \right) - \Delta f\left( 1 + \sum_{i \neq k} X_{i, \mu_{i}} \right) \right]
        \\ &= \mathbb{E}\left[ \Delta f\left( X_{\ell, \mu_{\ell}} + 1 + \sum_{i \neq k, \ell} X_{i, \mu_{i}} \right) - \Delta f\left( X_{k, \mu_{k}} + 1 + \sum_{i \neq k, \ell} X_{i, \mu_{i}} \right) \right]
        \\ &= \mathbb{E}\left[ \mathbb{E}\left[ \Delta f\left( X_{\ell, \mu_{\ell}} + 1 + \sum_{i \neq k, \ell} X_{i, \mu_{i}} \right) - \Delta f\left( X_{k, \mu_{k}} + 1 + \sum_{i \neq k, \ell} X_{i, \mu_{i}} \right) \mid \sum_{i \neq k, \ell} X_{i, \mu_{i}} \right] \right]
        \\ &= \mathbb{E}[ \mu_{\ell} \Delta f\left( 2 + \sum_{i \neq k, \ell} X_{i, \mu_{i}} \right) + \left( 1 - \mu_{\ell} \right) \Delta f\left( 1 + \sum_{i \neq k, \ell} X_{i, \mu_{i}} \right)
        \\ &\mathbin{\phantom{=}} - \mu_{k} \Delta f\left( 2 + \sum_{i \neq k, \ell} X_{i, \mu_{i}} \right) - \left( 1 - \mu_{k} \right) \Delta f\left( 1 + \sum_{i \neq k, \ell} X_{i, \mu_{i}} \right) ]
        \\ &= \mathbb{E}\left[ \Delta f\left( 2 + \sum_{i \neq k, \ell} X_{i, \mu_{i}} \right) \left( \mu_{\ell} - \mu_{k} \right) - \Delta f\left( 1 + \sum_{i \neq k, \ell} X_{i, \mu_{i}} \right) \left( \mu_{\ell} - \mu_{k} \right) \right]
        \\ &= \mathbb{E}\left[ \left( \Delta f\left( 2 + \sum_{i \neq k, \ell} X_{i, \mu_{i}} \right) - \Delta f\left( 1 + \sum_{i \neq k, \ell} X_{i, \mu_{i}} \right) \right) \left( \mu_{\ell} - \mu_{k} \right) \right]
        \\ &= \mathbb{E}\left[ \Delta f\left( 2 + \sum_{i \neq k, \ell} X_{i, \mu_{i}} \right) - \Delta f\left( 1 + \sum_{i \neq k, \ell} X_{i, \mu_{i}} \right) \right] \left( \mu_{\ell} - \mu_{k} \right) \text{.}
    \end{align*}

    Note that since $\Delta f (\cdot)$ is weakly increasing, the term
    \begin{align*}
        \mathbb{E}\left[ \Delta f\left( 2 + \sum_{i \neq k, \ell} X_{i, \mu_{i}} \right) - \Delta f\left( 1 + \sum_{i \neq k, \ell} X_{i, \mu_{i}} \right) \right]
    \end{align*}
    is always weakly positive. Thus, if $\mu_{\ell} \geq \mu_{k}$, then $\frac{\partial G}{\partial \mu_{k}}(\vec{\mu}) \geq \frac{\partial G}{\partial \mu_{\ell}}(\vec{\mu})$. Or, more colloquially, \textit{the highest partial derivatives of $G$ are those taken with respect to the input coordinate with the lowest-valued argument}.
    
    We now define the vector-valued function $\vec{u} : [0, 1] \rightarrow [0, 1]^{n}$ such that
    \begin{align*}
        \vec{u}(t) = (1 - t) \vec{\mu} + t \frac{\mu}{n} \vec{1}_{n}
    \end{align*}
    where $\vec{1}_{n} \in \mathbb{R}^{n}$ denotes a vector with a $1$ in every component. In other words, the function $\vec{u}$ defines a straight path between a vector of the original $\mu_{i}$s at $t = 0$ and a vector of identical means at $t = 1$, such that the sum
    \begin{align*}
        \sum_{i = 1}^{n} \vec{u}_{i}(t) = \mu
    \end{align*}
    stays constant for all $t \in [0, 1]$. Note that the order of the components of $\vec{u}(t)$ are the same as the order of the components of $\vec{\mu}$, i.e. if $\mu_{i} \geq \mu_{j}$, then for all $t \in [0, 1]$, $\vec{u}_{i}(t) \geq \vec{u}_{j}(t)$.

    Plugging $\vec{u}(t)$ into our objective function $G$, we obtain
    \begin{align*}
        G(\vec{u}(t)) = \sum_{S \subseteq [n]} f(|S|) \left[ \prod_{i \in S} \vec{u}_{i}(t) \right] \left[ \prod_{i \in [n] \setminus S} \left( 1 - \vec{u}_{i}(t) \right) \right] \text{,}
    \end{align*}
    and taking the first derivative of $G(\vec{u}(t))$ with respect to $t$ produces
    \begin{align*}
        \frac{d(G(\vec{u}(t)))}{dt} &= \sum_{i = 1}^{n} \frac{\partial G}{\partial \mu_{i}}(\vec{u}(t)) \frac{d(\vec{u}_{i}(t))}{dt}
        \\ &= \sum_{i = 1}^{n} \frac{\partial G}{\partial \mu_{i}}(\vec{u}(t)) \left( \frac{\mu}{n} - \mu_{i} \right) \text{.}
    \end{align*}
    If $\frac{\mu}{n} - \mu_{i} \leq \frac{\mu}{n} - \mu_{j}$, then $\mu_{i} \geq \mu_{j}$. Additionally, recall that if $\mu_{i} \geq \mu_{j}$ then $\vec{u}_{i}(t) \geq \vec{u}_{j}(t)$, which in turn implies $\frac{\partial G}{\partial \mu_{i}}(\vec{u}(t)) \leq \frac{\partial G}{\partial \mu_{j}}(\vec{u}(t))$. Therefore, $\frac{\partial G}{\partial \mu_{i}}(\vec{u}(t))$ and $\frac{\mu}{n} - \mu_{i}$ are monotonically related in the selection of $i \in [n]$, and thus
    \begin{align*}
        n \frac{1}{n} \sum_{i = 1}^{n} \frac{\partial G}{\partial \mu_{i}}(\vec{u}(t)) \left( \frac{\mu}{n} - \mu_{i} \right) &\geq n \left( \frac{1}{n} \sum_{i = 1}^{n} \frac{\partial G}{\partial \mu_{i}}(\vec{u}(t)) \right) \left( \frac{1}{n} \sum_{i = 1}^{n} \left( \frac{\mu}{n} - \mu_{i} \right) \right)
        \\ &= \left( \frac{1}{n} \sum_{i = 1}^{n} \frac{\partial G}{\partial \mu_{i}}(\vec{u}(t)) \right) (\mu - \sum_{i = 1}^{n} \mu_{i}) = 0 \text{.}
    \end{align*}
    We can therefore conclude that
    \begin{align*}
        \frac{d(G(\vec{u}(t)))}{dt} \geq 0
    \end{align*}
    for all $t \in [0, 1]$, and thus the path defined by $\vec{u}(\cdot)$ represents a way to weakly increase $G$ by moving from $t = 0$ to $t = 1$, i.e.
    \begin{align*}
        G\left( \frac{\mu}{n} \vec{1}_{n} \right) = G(\vec{u}(1)) \geq G(\vec{u}(0)) = G(\vec{\mu})
    \end{align*}
    and thus by the definition of $G$,
    \begin{align*}
        \sum_{S \subseteq [n]} f(|S|) \left[ \prod_{i \in S} \mu_{i} \right] \left[ \prod_{i \in [n] \setminus S} \left( 1 - \mu_{i} \right) \right] \leq \sum_{S \subseteq [n]} f(|S|) \left[ \prod_{i \in S} \frac{\mu}{n} \right] \left[ \prod_{i \in [n] \setminus S} \left( 1 - \frac{\mu}{n} \right) \right] \text{.}
    \end{align*}
    Observe that the above inequality can be rewritten as expectations with respect to $D$ and a variant of $D$ composed of identically distributed Bernoulli random variables, i.e. a binomial distribution. In other words,
    \begin{align*}
        \mathbb{E}[f(D)] \leq \mathbb{E}[f(X)]
    \end{align*}
    for $X \sim \text{Binomial}(n, \mu / n)$, which concludes the proof.
\end{proof}

A proof of \Cref{iid_is_worst} can also be found in other works \parencite{hoeffding-56, AM-23}, and can be naturally extended to show that $\mathbb{E}[f(X)]$ is maximized as the number of demands $n \to \infty$, i.e. as the binomial $X$ approaches a Poisson $Y$ with mean $\mathbb{E}[Y] = \mathbb{E}[X]$. The fact that $\mathbb{E}[f(X)] \leq \mathbb{E}[f(Y)]$ is used to conclude \Cref{numerator_inequality}. Furthermore, observe that $\mathbb{E}[f(X)] \leq \mathbb{E}[f(Y)]$ additionally implies that the bound in \Cref{main_theorem} is weakest when $n \to \infty$.

The proof of \Cref{premarkov_tail_bound} from \Cref{numerator_inequality} is relatively straightforward. It follows a similar structure to Markov's inequality on the random variable $f(D)$.

\begin{proof}[Proof of \Cref{premarkov_tail_bound}]
    From the law of total expectation, we have
    \begin{align*}
        \mathbb{E}[f(D)] = \mathbb{E}[f(D) \, \mathds{1}(D < \kappa)] + \mathbb{E}[f(D) \mid D \geq \threshold] \mathbb{P}(D \geq \threshold) \text{.}
    \end{align*}
    Since $f$ is weakly positive, $\mathbb{E}[f(D) \, \mathds{1}(D < \kappa)] \geq 0$. Furthermore, since $f$ is weakly convex, $\mathbb{E}[f(D) \mid D \geq \threshold] \geq f(\mathbb{E}[D \mid D \geq \threshold])$. Thus,
    \begin{align*}
         f(\mathbb{E}[D \mid D \geq \threshold]) \, \mathbb{P}(D \geq \threshold) \leq \mathbb{E}[f(D)] \text{.}
    \end{align*}
    Next, from \Cref{numerator_inequality}, we have $\mathbb{E}[f(D)] \leq \mathbb{E}[f(X)]$. Therefore,
    \begin{align*}
        f(\mathbb{E}[D \mid D \geq \threshold]) \, \mathbb{P}(D \geq \threshold) \leq \mathbb{E}[f(X)] \text{.}
    \end{align*}
    Dividing both sides by $f(\mathbb{E}[D \mid D \geq \threshold]) \geq 0$, we obtain as desired:
    \begin{align*}
        \mathbb{P}(D \geq \threshold) &\leq \frac{\mathbb{E}[f(X)]}{f(\mathbb{E}[D \mid D \geq \threshold])} \text{.} \qedhere
    \end{align*}
\end{proof}

Observe that the most notable difference between \Cref{premarkov_tail_bound} and Markov's inequality is that the denominator on the right-hand side is $f(\mathbb{E}[D \mid D \geq \threshold])$, not $f(\threshold)$.

Proving \Cref{main_theorem} from \Cref{premarkov_tail_bound} is quite involved.

\begin{proof}[Proof sketch of \Cref{main_theorem}]
    We select the ReLU mapping $f(x) = \max(x - \reluparam, 0)$ for a value of $\reluparam \leq \threshold$, and obtain
    \begin{align*}
        \mathbb{P}(D \geq \threshold) \leq \frac{\mathbb{E}[\max(X - \reluparam, 0)]}{\max(\mathbb{E}[D \mid D \geq \threshold] - \reluparam, 0)}
    \end{align*}
    from \Cref{premarkov_tail_bound}, with $X \sim \text{Binomial}(n, \mu / n)$. Observe that since expectations with respect to a binomial distribution have a closed form, the right-hand side can be written as a function of two parameters, $\mu^{\uparrow} \triangleq \mathbb{E}[D \mid D \geq \threshold]$ and $\mu^{\downarrow} \triangleq \mathbb{E}[D \mathds{1}(D < \threshold)]$, as follows:
    \begin{align*}
        \frac{\mathbb{E}[\max(X - \reluparam, 0)]}{\max(\mathbb{E}[D \mid D \geq \threshold] - \reluparam, 0)} = \frac{\sum_{i = 0}^{n} \max(i - \reluparam, 0) \frac{n!}{i! (n - i)!} \left( \frac{\mu^{\uparrow} \mathbb{P}(D \geq \threshold) + \mu^{\downarrow}}{n} \right)^{i} \left( 1 - \frac{\mu^{\uparrow} \mathbb{P}(D \geq \threshold) + \mu^{\downarrow}}{n} \right)^{n - i}}{\max(\mu^{\uparrow} - \reluparam, 0)}
    \end{align*}

    In the case where $\reluparam < 0$, we show directly that the right-hand side is decreasing in $\mu^{\uparrow}$, and move $\mu^{\uparrow} \to \threshold$. In the more complicated case where $\reluparam \geq 0$, we adjust the parameters of the right-hand side in two steps:
    \begin{enumerate}
        \item Move along a path of pairs $(\mu^{\uparrow}, \mu^{\downarrow})$ that keeps the right-hand side constant, and such that the end of the path is the point $(\threshold, t)$ for some $t$ below the original $\mu^{\downarrow}$.
        \item Adjust $t$ upwards to the original $\mu^{\downarrow}$, and show that this only increases the right-hand side.
    \end{enumerate}
    Therefore, the final value of the right-hand side should be at least $\mathbb{P}(D \geq \threshold)$. See \Cref{fig:main-theorem-proof-sketch} for an illustration.
    
    \begin{figure}[h]
        \center
        \includegraphics[scale=0.4]{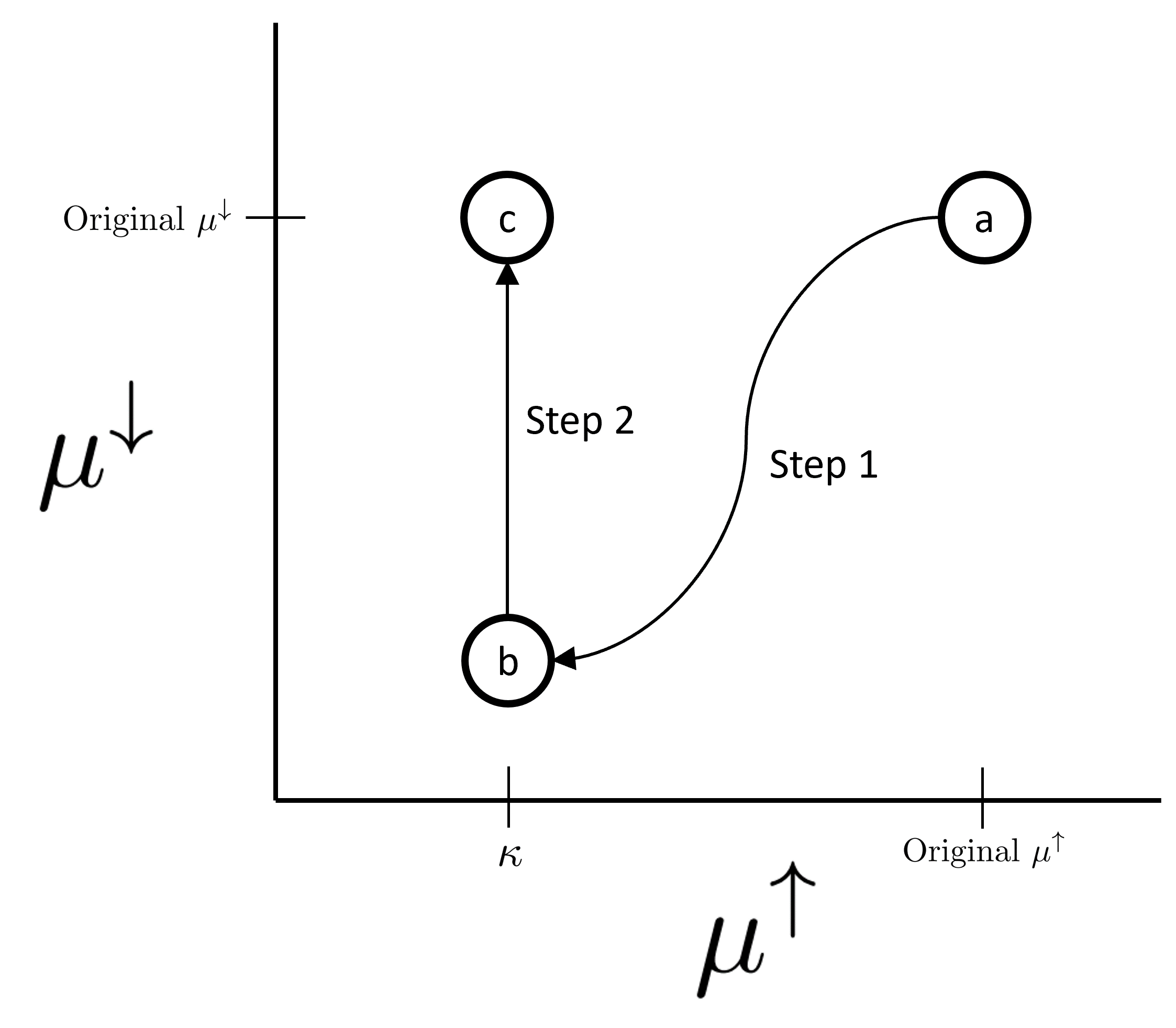}
        \caption{Step 1 moves between different $(\mu^{\uparrow}, \mu^{\downarrow})$ pairs. At all points along this path in step 1, the RHS stays constant and thus above $\mathbb{P}(D \geq \threshold)$. Then, since the RHS is increasing along the $\mu^{\downarrow}$ axis, moving from $b$ to $c$ in step 2 raises the RHS even further above $\mathbb{P}(D \geq \threshold)$.}
        \label{fig:main-theorem-proof-sketch}
    \end{figure}

    The resulting formula for the right-hand side is equal to the formula for the expectation of a binomial distribution with mean $\threshold \mathbb{P}(D \geq \threshold) + \mu^{\downarrow} = \threshold \throughput$, divided by the value $\max(\threshold - \reluparam, 0)$. A straightforward application of \Cref{relu_is_better} completes the theorem, showing that this ReLU bound can be weakened to $f(X) / f(\threshold)$ for $X \sim \text{Binomial}(n, \threshold \throughput / n)$.
\end{proof}

Observe that the ReLU function described in \Cref{relu_is_better} is weakly convex, weakly positive, and is strictly increasing past $\threshold$. It thus satisfies all the requirements for making a concentration inequality using \Cref{main_theorem}, and we can conclude that for any concentration inequality made using \Cref{main_theorem}, there exists a ReLU function that produces a concentration inequality at least as strong. This means that in order to find the functions $f$ for use with \Cref{main_theorem} that produce the strongest concentration inequalities, it suffices to search solely among the class of ReLU functions.

\section{Multi-Unit Posted-Price Mechanisms}
In this section we apply our analysis of throughput and availability to deriving prophet inequalities for posted-price mechanisms. Typically, prophet inequalities are concerned with showcasing the existence of a price that obtains good worst-case expected welfare. However, a price that obtains good worst-case expected welfare can potentially result in low availability. We examine what occurs when the mechanism designer cares about both availability and expected welfare, converting the availability-throughput tradeoff curves from the soup kitchen problem into availability-welfare tradeoff curves along which a mechanism designer can select their preferred location.

\subsection{Model}
Consider running a posted-price mechanism to sell $K \in \mathbb{R}_{+}$ units of a divisible good to $n$ agents. Agents purchase up to a unit of the good and agent $i$'s value for receiving $x$ units of good is given by the value function $v_{i}(x)$, $v_{i}: [0, 1] \xrightarrow{} \mathbb{R}$, drawn from some distribution $F_{i}$ over value functions. All distributions $F_{i}$ are independent and we assume $v_i(0) = 0$.

Agents have quasi-linear utility from purchasing $x$ units of good at price $p$ per-unit good, i.e, agent $i$ gets a utility
\begin{align*}
    u_{i}(x) = v_{i}(x) - px \text{.}
\end{align*}
Thus, if $S_i$ units of the good are left unsold when the agent interacts with the mechanism, the agent purchases $B_i = \argmax_{x \in [0, \min \{1, S_i\}]} \{ v_i(x) - px\}$ to maximize its utility. We make a mild assumption on the valuation function $v_i$ that for any quantity of residual supply $S_i$, there is a maximizer $B_i$ that optimizes agent $i$'s utility (if this is not unique, the agent can mix between buying any of these quantities). Observe that the class of all weakly concave functions is contained in the class of functions satisfying the above assumption.

Let $D_i = \argmax_{x \in [0, 1]} \{ v_i(x) - px\}$ be agent $i$'s desired quantity of goods assuming no supply constraints, and let $D = \sum_{i = 1}^n D_i$ be the total quantity of goods demanded by the agents.

The agents' arrival order is determined adversarially as follows:
\begin{enumerate}
    \item Agents realize their value functions $(v_{1}, \dots, v_{n})$.
    \item An adversary determines the order in which the agents interact with the online mechanism fully knowing agents' value functions so as to minimize the welfare of the mechanism.
    \item Agents interact with the mechanism in the order determined by the adversary, purchasing $B_{i}$ units of the good so as to optimize their utility.
\end{enumerate}

The mechanism designer cares about selecting a price $p$, not only to achieve an allocation with high expected welfare, but to make sure the mechanism can reliably serve agents. We define
\begin{align*}
    \delta \triangleq \mathbb{P}\left( \bigcup_{i = 1}^{n} \{ D - D_{i} > K - 1 \} \right)
\end{align*}
as the probability that there exists some agent who, if placed at the worst/latest possible location in the arrival order, would be denied the option to purchase up to $1$ unit of the good.

$1 - \delta$ is closely related to the concept of availability $\availability$ introduced in the soup kitchen problem. A low $\delta$ means that shoppers can be confident that the posted-price mechanism will have enough of the good to serve them, no matter what they choose to buy. Observe also that if we define the supply threshold $\threshold = K - 1$ and the availability of the mechanism as $\availability = \mathbb{P}(D < \threshold)$, then the availability is a lower bound on $1 - \delta$. We will therefore term $\delta$ the "real unavailability" and $K$ the "real supply" when needed to differentiate from our use of unavailability $1 - \availability$ and supply $\threshold$ from the soup kitchen problem, though we will refer to both by the same term outside such cases.

Let $p(\delta)$ be the price such that, with probability $1 - \delta$, every agent is able to choose its desired quantity irrespective of its position in the queue. Note that $p(\delta)$ is independent of the order in which agents interact with the mechanism. We trace out a lower bound on the expected welfare $\text{APX}$ of the posted-price mechanism at price $p(\delta)$, against that of the optimal offline mechanism $\text{OPT}$, as a function of $\delta$.

\subsection{Results}
We produce a prophet inequality which demonstrates the tradeoff between $\delta$ and expected welfare.
\begin{propositionrep}\label{welfare_bound}
    In a multi-unit posted-price mechanism with up-to-unit demand buyers, where $\delta \triangleq \mathbb{P}\left( \bigcup_{i = 1}^{n} \{ D - D_{i} > K - 1 \} \right)$ is the probability that any agent, if prioritized last by the mechanism, does not have the option to buy $1$ unit of the good, then
    \begin{align*}
        \frac{\text{APX}}{\text{OPT}} \geq \min\left( \frac{K - 1}{K} \mathbb{E}\left[ \min\left(\frac{D}{K - 1}, 1 \right) \right], 1 - \delta \right) \text{.}
    \end{align*}
\end{propositionrep}

\begin{proof}
    Let $B = \sum_{i = 1}^{n} B_{i}$ denote the sum of the purchases of all $n$ agents. To compute the worst-case (lowest) value of $\text{APX} / \text{OPT}$, we begin by lower bounding $\text{APX}$, followed by upper bounding $\text{OPT}$. First, the lower bound for $\text{APX}$:

    \begin{align*}
        \text{APX} &= \mathbb{E}[\sum_{i = 1}^{n} v_{i}(B_{i})]
        \\ &= \mathbb{E}[\sum_{i = 1}^{n} \left( pB_{i} + (v_{i}(B_{i}) - pB_{i}) \right)]
        \\ &= p \mathbb{E}[B] + \sum_{i = 1}^{n} \mathbb{E}[v_{i}(B_{i}) - pB_{i}]
        \\ &= p \mathbb{E}[B] + \sum_{i = 1}^{n} \mathbb{E}[u_{i}(B_{i})]
        \\ &= p \mathbb{E}[B] + \sum_{i = 1}^{n} \mathbb{E}[u_{i}(B_{i}) \mathds{1}(S_{i} \geq M)] + \sum_{i = 1}^{n} \mathbb{E}[u_{i}(B_{i}) \mathds{1}(S_{i} < M)]
        \\ &= p \mathbb{E}[B] + \sum_{i = 1}^{n} \mathbb{E}[u_{i}(D_{i}) \mathds{1}(S_{i} \geq M)] + \sum_{i = 1}^{n} \mathbb{E}[u_{i}(B_{i}) \mathds{1}(S_{i} < M)]
        \\ &\geq p \mathbb{E}[B] + \sum_{i = 1}^{n} \mathbb{E}[u_{i}(D_{i}) \mathds{1}(S_{i} \geq M)]
    \end{align*}
    with the last step following from the fact that agents, no matter how much is left on the shelf when they arrive at the mechanism, can always choose to buy nothing at all and obtain 0 utility. More formally,
    \begin{align*}
        u_{i}(B_{i}) &= \max_{b \in [0, \min(1, S_{i})]} \left[ v_{i}(b) - pb \right]
        \\ &\geq \max_{b \in [0, \min(1, S_{i})]} \left[ v_{i}(b) - pb \right]
        \\ &= v_{i}(0) - p \cdot (0)
        \\ &= 0 \text{,}
    \end{align*}
    as by assumption, $v_{i}(0) = 0$.
    
    We can simplify this lower bound on $\text{APX}$ by noting that when $D - D_{i} \leq K - 1$, it implies that $S_{i} \geq 1$. Thus,
    
    \begin{align*}
        \text{APX} &\geq p \mathbb{E}[B] + \sum_{i = 1}^{n} \mathbb{E}[u_{i}(D_{i}) \mathds{1}(D - D_{i} \leq K - 1)]
        \\ &= p \mathbb{E}[B] + \sum_{i = 1}^{n} \mathbb{E}[u_{i}(D_{i}) \mid D - D_{i} \leq K - 1] \mathbb{P}(D - D_{i} \leq K - 1) \text{.}
    \end{align*}
    
    Since $D_{i}$ and $D - D_{i}$ are independent, $\mathbb{E}[u_{i}(D_{i}) \mid D - D_{i} \leq K - 1] = \mathbb{E}[u_{i}(D_{i})]$, and we have
    \begin{align*}
        \text{APX} \geq p \mathbb{E}[B] + \sum_{i = 1}^{n} \mathbb{E}[u_{i}(D_{i})] \mathbb{P}(D - D_{i} \leq K - 1) \text{.}
    \end{align*}
    
    Lastly, we will lower bound both $\mathbb{P}(D - D_{i} \leq K - 1)$ and $\mathbb{E}[u_{i}(D_{i})]$ and substitute these into our lower bound for $\text{APX}$. First, we compute
    \begin{align*}
        \mathbb{P}(D - D_{i} \leq K - 1) &\geq \mathbb{P}(\bigcap_{i = 1}^{n} \{ D - D_{i} \leq K - 1 \})
        \\ &= 1 - \mathbb{P}(\bigcup_{i = 1}^{n} \{ D - D_{i} > K - 1 \})
        \\ &= 1 - \delta
    \end{align*}
    and so $\text{APX} \geq p \mathbb{E}[B] + (1 - \delta) \sum_{i = 1}^{n} \mathbb{E}[u_{i}(D_{i})]$. Next, we note that since agent $i$ selects a $D_{i}$ to maximize their own utility, subject to the restriction that they purchase no more than $1$ unit of the good, the amount of the good they would receive in a socially optimal allocation subject to the same restriction would give them at most the same utility. More formally,
    \begin{align*}
        \mathbb{E}[u_{i}(D_{i})] &= \mathbb{E}[\mathbb{E}[u_{i}(D_{i}) \mid \vec{v}]]
        \\ &= \mathbb{E}[\mathbb{E}[\max_{b \in [0, 1]}[v_{i}(b) - pb] \mid \vec{v}]]
        \\ &\geq \mathbb{E}[\mathbb{E}[v_{i}(B_{i}^{*}) - pB_{i}^{*} \mid \vec{v}]]
        \\ &= \mathbb{E}[v_{i}(B_{i}^{*}) - pB_{i}^{*}]
        \\ &= \mathbb{E}[u_{i}(B_{i}^{*})] \text{.}
    \end{align*}
    
    Therefore, we can conclude that $\text{APX} \geq p \mathbb{E}[B] + (1 - \delta) \sum_{i = 1}^{n} \mathbb{E}[u_{i}(B_{i}^{*})]$.
    
    We will now proceed to upper bound $\text{OPT}$. We compute as follows:
    \begin{align*}
        \text{OPT} &= \mathbb{E}[\sum_{i = 1}^{n} v_{i}(B_{i}^{*})]
        \\ &= \mathbb{E}[\sum_{i = 1}^{n} (pB_{i}^{*} + (v_{i}(B_{i}^{*}) - pB_{i}^{*}))]
        \\ &= p \mathbb{E}[\sum_{i = 1}^{n} B^{*}_{i}] + \sum_{i = 1}^{n} \mathbb{E}[v_{i}(B_{i}^{*}) - pB_{i}^{*}]
        \\ &= p \mathbb{E}[\sum_{i = 1}^{n} B^{*}_{i}] + \sum_{i = 1}^{n} \mathbb{E}[u_{i}(B_{i}^{*})]
        \\ &\leq pK + \sum_{i = 1}^{n} \mathbb{E}[u_{i}(B_{i}^{*})]
    \end{align*}
    
    The last step follows from the fact that no allocation, including the optimal allocation, can use more than the total stock of $K$ goods.
    
    Combining our lower and upper bounds for $\text{APX}$ and $\text{OPT}$, respectively, we have
    \begin{align*}
        \frac{\text{APX}}{\text{OPT}} &\geq \frac{p \mathbb{E}[B] + (1 - \delta) \sum_{i = 1}^{n} \mathbb{E}[u_{i}(B_{i}^{*})]}{pK + \sum_{i = 1}^{n} \mathbb{E}[u_{i}(B_{i}^{*})]}
        \\ &\geq \min\left( \frac{\mathbb{E}[B]}{K}, 1 - \delta \right) \text{.}
    \end{align*}
    
    This splits our welfare bound into two cases; a high $\delta$ case, where the term $1 - \delta$ is smaller, and a low $\delta$ case, where the term $\mathbb{E}[B] / K$ is smaller.
    
    Our goal now becomes finding a lower bound for $\mathbb{E}[B] / K$ in terms of $D$. We begin by applying the law of total expectation to $\mathbb{E}[B]$, yielding
    \begin{align*}
         \mathbb{E}[B] &= \mathbb{E}[B \mid \bigcap_{i = 1}^{n} \{ S_{i} \geq 1 \}] \mathbb{P}(\bigcap_{i = 1}^{n} \{ S_{i} \geq 1 \})
         \\ &\mathbin{\phantom{=}} + \mathbb{E}[B \mid D > K - 1] \mathbb{P}(\bigcup_{i = 1}^{n} \{ S_{i} < 1 \})
         \\ &\geq \mathbb{E}[D \mid \bigcap_{i = 1}^{n} \{ S_{i} \geq 1 \}] \mathbb{P}(\bigcap_{i = 1}^{n} \{ S_{i} \geq 1 \})
         \\ &\mathbin{\phantom{=}} + \mathbb{E}[K - 1 \mid D > K - 1] \mathbb{P}(\bigcup_{i = 1}^{n} \{ S_{i} < 1 \})
         \\ &\geq \mathbb{E}[\min(D, K - 1) \mid \bigcap_{i = 1}^{n} \{ S_{i} \geq 1 \}] \mathbb{P}(\bigcap_{i = 1}^{n} \{ S_{i} \geq 1 \})
         \\ &\mathbin{\phantom{=}} + \mathbb{E}[\min(D, K - 1) \mid D > K - 1] \mathbb{P}(\bigcup_{i = 1}^{n} \{ S_{i} < 1 \})
         \\ &= \mathbb{E}[\min(D, K - 1)]
    \end{align*}
    and so $\mathbb{E}[B] / K \geq \frac{K - 1}{K} \mathbb{E}\left[\min\left( \frac{D}{K - 1}, 1 \right)\right]$. We thus have
    
    \begin{align*}
        \frac{\text{APX}}{\text{OPT}} \geq \min\left( \frac{K - 1}{K} \mathbb{E}\left[ \min\left( \frac{D}{K - 1}, 1 \right) \right], 1 - \delta \right) \text{.}
    \end{align*}
\end{proof}

For the purposes of bounding the welfare approximation ratio achieved by a posted-price mechanism in \Cref{welfare_bound}, we have the particular goal of finding a tail bound that is tractably invertible--that is, since the throughput $\throughput = \mathbb{E}\left[\min\left( \frac{D}{K - 1} , 1 \right) \right]$ shows up as a direct term in our bound on the approximation ratio, we need to find a lower bound on $\throughput$ in terms of the unavailability $\delta$ rather than an upper bound on unavailability $\delta$ in terms of throughput $\throughput$. This bound can be directly inserted into our lower bound on $\text{APX} / \text{OPT}$. For tractability purposes, this "inverse tail bound" should be writable in terms of elementary functions, while also remaining close to tight. As described in the previous section, to match the notation of \Cref{main_theorem}, which we will be using to construct these invertible tail bounds, we will define $\threshold = K - 1$.

One such bound can be achieved by selecting $f(x) = \exp(\lambda x) - 1$ and selecting the parameter $\lambda$ to make the bound as tight as possible. We present this bound in the following theorem:

\begin{theoremrep}\label{invertible_near_chernoff_bound}
    Let $D_{1}, ..., D_{n} \in [0, 1]$ be independent random variables with sum $D = \sum_{i = 1}^{n} D_{i}$ and let supply threshold $\kappa \geq 0$. For $\delta \triangleq \mathbb{P}\left( \bigcup_{i = 1}^{n} \{ D - D_{i} > \kappa \} \right)$, we then have the following bound on throughput $\throughput \triangleq \mathbb{E}[\min(D / \kappa, 1)]$:
    \begin{align*}
        \throughput \geq \frac{\frac{1}{\kappa} \ln\left( \exp\left( \kappa \sqrt{1 - \exp\left( \frac{1}{\kappa} \ln(\delta) \right)} \right) + (1 - \delta) \right)}{\exp\left( \sqrt{1 - \exp\left( \frac{1}{\kappa} \ln(\delta) \right)} - \frac{1}{\kappa} \ln(\delta) \right) - 1}
    \end{align*}
\end{theoremrep}

\begin{proof}
    The mapping $f(x) = \exp(\lambda x) - 1$ satisfies the requirements of \Cref{main_theorem}, so we have
    \begin{align*}
        \mathbb{P}(D \geq \kappa) \leq \frac{\mathbb{E}[\exp(\lambda Y) - 1]}{\exp(\lambda \kappa) - 1}
    \end{align*}
    for $Y \sim \text{Poisson}(\kappa \throughput)$, and since $\delta = \mathbb{P}\left( \bigcup_{i = 1}^{n} \{ D - D_{i} > \kappa \} \right) \leq \mathbb{P}(D \geq \kappa)$, we have
    \begin{align*}
        \delta \leq \frac{\mathbb{E}[\exp(\lambda Y) - 1]}{\exp(\lambda \kappa) - 1} \text{.}
    \end{align*}
    Multiplying both sides by $\exp(\lambda \kappa) - 1$ and adding $1$ to both sides, and we obtain
    \begin{align*}
        \delta \exp(\lambda \kappa) + (1 - \delta) \leq \mathbb{E}[\exp(\lambda Y)] \text{.}
    \end{align*}
    Using the formula for the moment-generating function of a Poisson-distributed variable, we have $\mathbb{E}[\exp(\lambda Y)] = \exp(\kappa \throughput (\exp(\lambda) - 1))$. Inserting this into the above inequality and rearranging gives us
    \begin{align*}
        \frac{\frac{1}{\kappa} \ln(\delta \exp(\lambda \kappa) + (1 - \delta))}{\exp(\lambda) - 1} \leq \throughput \text{.}
    \end{align*}
    This bound holds for any positive $\lambda$, but would be stronger for some selections of $\lambda$ than others. The optimal $\lambda$ does not have a closed form, but it can be very closely approximated. Consider that
    \begin{align*}
        \frac{\frac{1}{\kappa} \ln(\delta \exp(\lambda \kappa) + (1 - \delta))}{\exp(\lambda) - 1} \geq \frac{\frac{1}{\kappa} \ln(\delta \exp(\lambda \kappa))}{\exp(\lambda) - 1} \text{.}
    \end{align*}
    Close approximations to the optimal $\lambda$ on the right-hand side will tend to be underestimates of the optimal $\lambda$ on the left-hand side. Thus, we will demonstrate a bound on the optimal $\lambda$ for the left-hand side. We begin by taking the first derivative with respect to $\lambda$ and setting it equal to 0:
    \begin{align*}
        0 = \frac{d\left( \frac{\frac{1}{\kappa} \ln(\delta \exp(\lambda \kappa))}{\exp(\lambda) - 1} \right)}{d \lambda} = \frac{1}{\exp(\lambda) - 1} - \exp(\lambda) \frac{\frac{1}{\kappa} \ln(\delta \exp(\lambda \kappa))}{(\exp(\lambda) - 1)^{2}}
    \end{align*}
    Rearranging gives us
    \begin{align*}
        1 - \lambda - \exp(-\lambda) = \frac{1}{\kappa} \ln(\delta) \text{.}
    \end{align*}
    The mapping $\lambda \mapsto 1 - \lambda - \exp(-\lambda)$ is not invertible, but there is a known lower bound on the inverse mapping \parencite{RS-20} of
    \begin{align*}
        x \mapsto \sqrt{1 - \exp(x)} - x \text{.}
    \end{align*}
    Plugging $\frac{1}{\kappa} \ln(\delta)$ into this bound gives us an approximately optimal $\lambda$ of
    \begin{align*}
        \lambda = \sqrt{1 - \exp\left( \frac{1}{\kappa} \ln(\delta) \right)} - \frac{1}{\kappa} \ln(\delta)
    \end{align*}
    and inserting this selected $\lambda$ into our tail bound produces
    \begin{align*}
        \frac{\frac{1}{\kappa} \ln(\delta \exp(\left( \sqrt{1 - \exp\left( \frac{1}{\kappa} \ln(\delta) \right)} - \frac{1}{\kappa} \ln(\delta) \right) \kappa) + (1 - \delta))}{\exp\left( \sqrt{1 - \exp\left( \frac{1}{\kappa} \ln(\delta) \right)} - \frac{1}{\kappa} \ln(\delta) \right) - 1} \leq \throughput \text{,}
    \end{align*}
    which simplifies to
    \begin{align*}
        \frac{\frac{1}{\kappa} \ln\left( \exp\left( \kappa \sqrt{1 - \exp\left( \frac{1}{\kappa} \ln(\delta) \right)} \right) + (1 - \delta) \right)}{\exp\left( \sqrt{1 - \exp\left( \frac{1}{\kappa} \ln(\delta) \right)} - \frac{1}{\kappa} \ln(\delta) \right) - 1} \leq \throughput \text{.}
    \end{align*}
\end{proof}

The proof of \Cref{invertible_near_chernoff_bound} in the appendix is similar to the construction of a conventional Chernoff bound, though with some key differences.

We can insert the bound in \Cref{invertible_near_chernoff_bound} directly into our inequality on $\frac{\text{APX}}{\text{OPT}}$ from \Cref{welfare_bound}. Note that \Cref{main_theorem} also permits us to explore superior bounds on $\throughput$ using ReLU functions that, though less tractable, are significantly tighter. Additionally, we can use \Cref{main_theorem} with the same proof techniques as the conventional Chernoff bound. To understand the looseness of our tractable bound in \Cref{invertible_near_chernoff_bound}, we compare it to these other two bounds in \Cref{fig:chernoff-comparison} and \Cref{fig:chernoff-comparison-zoom}.
    
\begin{figure}[h]
    \center
    \includegraphics[scale=0.7]{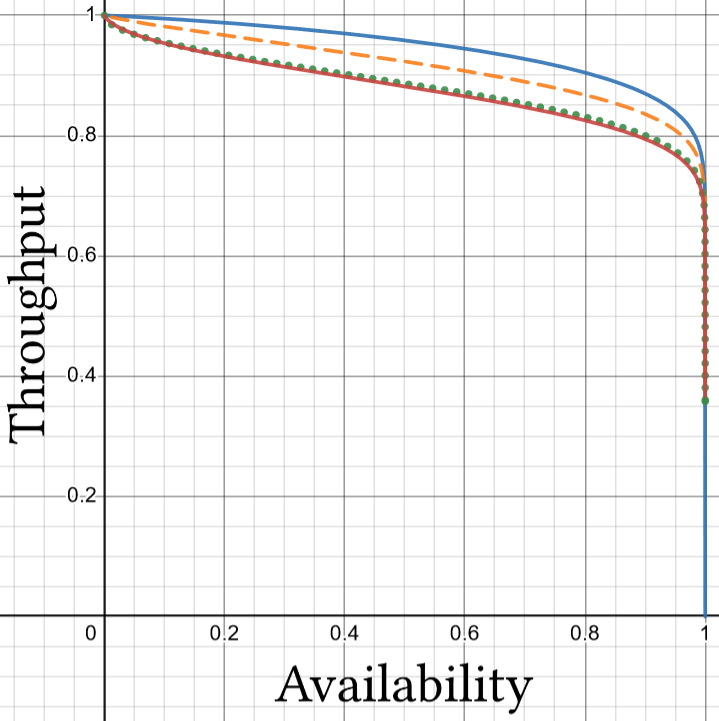}
    \caption{Graph of minimum throughput $\throughput = \mathbb{E}[\min(D / \threshold, 1)]$ as a function of availability $\availability = \mathbb{P}(D < \threshold)$ when $\threshold = 100$. Dashed orange line is produced when $f$ is a near-optimal ReLU function, the red line is produced when $f(x) = \exp(\lambda x) - 1$ for a carefully chosen $\lambda$, and the dotted green line is the conventional Chernoff-style bound. Compare to solid blue line, which represents the true $(\availability, \throughput)$ pairs for a Poisson distribution.}
    \label{fig:chernoff-comparison}
\end{figure}

\begin{figure}[h]
    \center
    \includegraphics[scale=0.6]{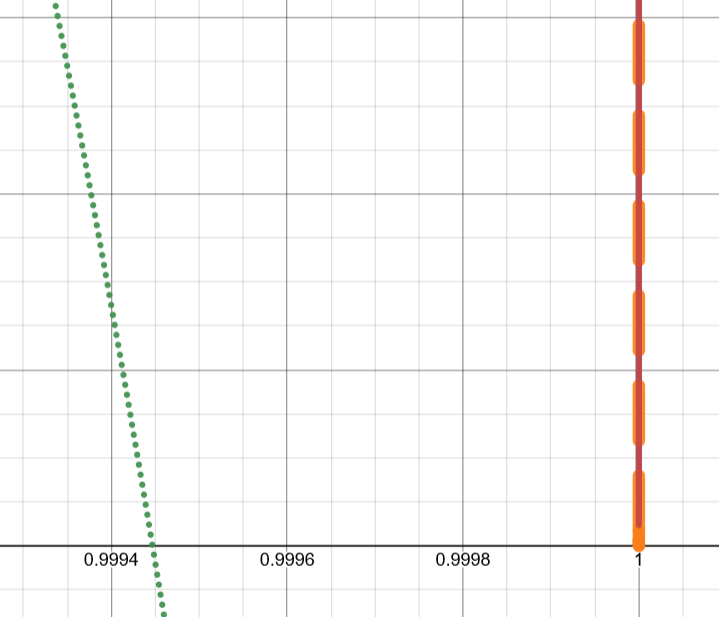}
    \caption{Zoomed in version of \Cref{fig:chernoff-comparison} on the point (availability, throughput) = (1, 0), with $\threshold = 5$. Observe that all bounds save for the dotted green conventional Chernoff-style bound very steeply approach (1, 0) as $\mathbb{P}(D < \threshold) \to 1$, whereas the Chernoff-style bound becomes negative.}
    \label{fig:chernoff-comparison-zoom}
\end{figure}

The first of our plotted bounds is the one from \Cref{invertible_near_chernoff_bound}, which is displayed as the red line in \Cref{fig:chernoff-comparison}. We begin by comparing it to a second bound, which is constructed using steps identical to those used to construct the traditional Chernoff bound; we select $f(x) = \exp(\lambda x)$, carefully select a value of $\lambda > 0$ equal to
\begin{align*}
    \lambda = \ln(1 / \throughput)
\end{align*}
to make the bound as strong as possible, and then weaken the result via a Pad\'{e} approximation. The result is a bound with the same functional form as the traditional Chernoff, but with the mean of the distribution $\mathbb{E}[D]$ replaced by the smaller quantity $\threshold \throughput \leq \mathbb{E}[D]$, where $\threshold \throughput$ is the absolute throughput:
\begin{align*}
    1 - \availability = \mathbb{P}(D \geq \kappa) \leq \exp\left( - \frac{\frac{1}{2} \left( \kappa - (\kappa \throughput) \right)^{2}}{(\kappa \throughput) + \frac{1}{3} \left( \kappa - (\kappa \throughput) \right)} \right)
\end{align*}
Inverting this bound produces
\begin{align*}
    \throughput \geq 1 + \frac{2}{3} \frac{1}{\kappa} \ln\left( \frac{1}{1 - \availability} \right) - \sqrt{\left( \frac{2}{3} \frac{1}{\kappa} \ln\left( \frac{1}{1 - \availability} \right) \right)^{2} + 2 \frac{1}{\kappa} \ln\left( \frac{1}{1 - \availability} \right)} \text{,}
\end{align*}
which we plot as the dotted green line in \Cref{fig:chernoff-comparison}.

Our first bound has a major advantage over the traditional Chernoff-style bound: Observe how in \Cref{fig:chernoff-comparison-zoom}, which is very zoomed in, that the traditional Chernoff-style bound produces a negative lower bound on throughput $\throughput$ for very high availability $\mathbb{P}(D < \threshold) \approx 1$. While Chernoff bounds are very effective in the asymptotic case where the threshold $\threshold \to \infty$, they fall short at bounding very low tail probabilities for fixed $\threshold$. A failure to bound low tail probabilities is an especially big issue when $\threshold$ is very low, where the range of tail probabilities with a negative lower bound on the throughput $\throughput$ can be quite large. It is also a problem for analyzing posted-price mechanisms designed to have very low probabilities of running out of supplies, since it means the traditional Chernoff bound provides no welfare guarantees as availability $\mathbb{P}(D < \kappa) \to 1$.

We compare both of these bounds to a third bound, which is constructed with an optimal ReLU function $x \mapsto \max(x - \reluparam, 0)$ that we can use with \Cref{main_theorem}. The value of $\reluparam$ is adaptively selected, based on the desired value of $\mathbb{P}(D \geq \threshold)$, to make the bound
\begin{align*}
    \mathbb{P}(D \geq \kappa) \leq \frac{\mathbb{E}[\max(Y - \reluparam, 0)]}{\max(\kappa - \reluparam, 0)}
\end{align*}
as strong as possible for $Y \sim \text{Poisson}(\threshold \throughput)$. We display the bound thus obtained as a dashed orange line in \Cref{fig:chernoff-comparison}, to showcase how much lower the welfare guarantees of a posted-price mechanism become when using a non-ReLU function $f$.

Lastly, all three of these bounds are compared to the true $\text{availability-throughput}$ pairs of Poisson distributions. This is the solid blue line in \Cref{fig:chernoff-comparison}. Since the three bounds are lower bounds on throughput $\throughput$ in terms of availability $\availability$, it is clear that the actual throughput $\throughput$ of a Poisson distribution for any fixed $\availability$ should be higher than these lower bounds. Thus, this solid blue line should be seen as a benchmark to compare against our three lower bounds on the throughput $\throughput$; the smaller the gap between the true throughput $\throughput$ of a Poisson distribution and a given lower bound on $\throughput$, the better. Our goal is to minimize this gap as much as possible while keeping our bounds analytically tractable. As one can see, our optimal ReLU bound (the dashed orange line) minimizes this gap the most, but is quite intractable. The other two bounds perform worse, but are far more tractable. Thus, there is a clear tradeoff between tractability and strength of our three bounds.

\begin{figure}
    \center
    \begin{subfigure}[h]{0.4\linewidth}
        \includegraphics[width=\linewidth]{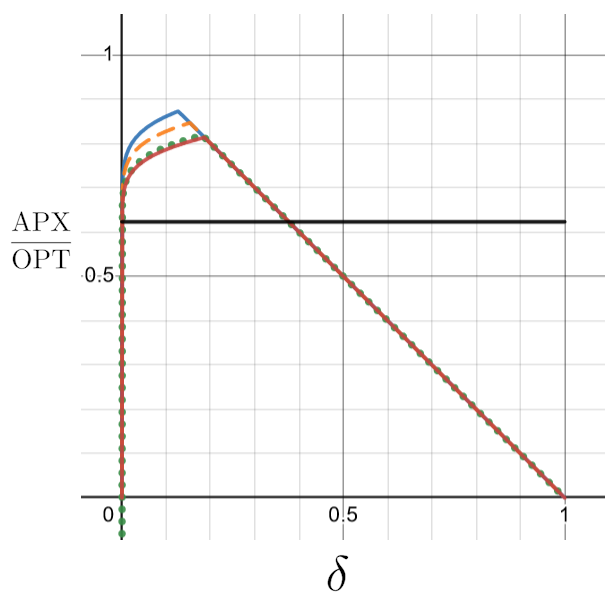}
        \caption{Welfare bounds given the same parameters as \Cref{fig:chernoff-comparison}, which has supply $K = 101$.}
        \label{fig:welfare-big-vs-prev}
    \end{subfigure}
    \hfill
    \begin{subfigure}[h]{0.4\linewidth}
        \includegraphics[width=\linewidth]{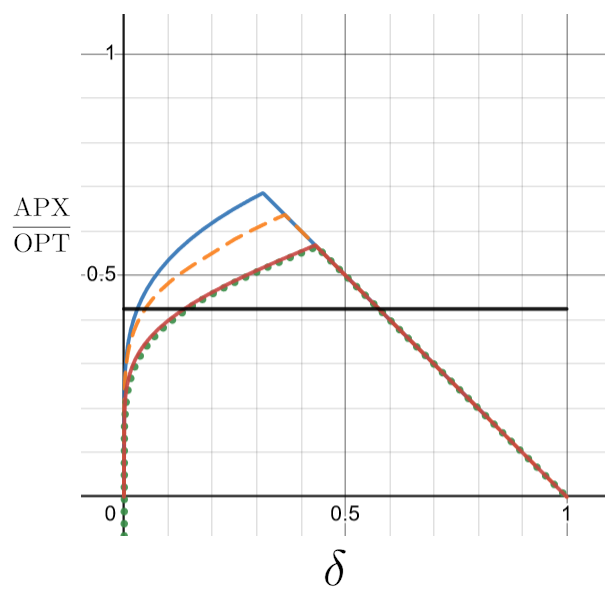}
        \caption{Welfare bounds in a low-supply setting, with supply $K = 10$.}
        \label{fig:welfare-small-vs-prev}
    \end{subfigure}%
    \caption{Plugging the same bounds on $\throughput$ as in \Cref{fig:chernoff-comparison} into our lower bound on $\frac{\text{APX}}{\text{OPT}}$. The black horizontal line is the $\Big(1 + \sqrt{\frac{8 \ln(K)}{K}}\Big)^{-1}$ lower bound on $\frac{\text{APX}}{\text{OPT}}$ obtained by \textcite{HKS-07}. Note that this welfare bound is achieved by a particular $\delta$ value, and we have drawn it as a horizontal black line to better illustrate for which values of $\delta$ our own bounds produce stronger or weaker welfare guarantees.}
    \label{fig:welfare-comparison-vs-prev}
\end{figure}

Recall that by defining our supply threshold $\threshold = K - 1$, each of the bounds on throughput $\throughput = \mathbb{E}[\min(D / \threshold, 1)]$ (and the curve corresponding to the Poisson distribution) can be inserted into our welfare bound
\begin{align*}
    \frac{\text{APX}}{\text{OPT}} &\geq \min\left( \frac{K - 1}{K} \mathbb{E}\left[ \min\left( \frac{D}{K - 1}, 1 \right) \right], 1 - \delta \right)
    \\ &= \min\left( \frac{K - 1}{K} \throughput, 1 - \delta \right)
\end{align*}
from \Cref{welfare_bound} to obtain the tradeoff curves displayed in \Cref{fig:welfare-comparison-vs-prev}. Any particular unavailability $\delta$ produces a different lower bound on $\frac{\text{APX}}{\text{OPT}}$, and the solid blue Poisson curve is an upper bound on any such approximation ratio. Thus, the true tradeoff curve must lie between the strongest of our lower bounds and the Poisson curve. We plot these tradeoffs for both a high supply environment (\Cref{fig:welfare-big-vs-prev}) and a low supply environment (\Cref{fig:welfare-small-vs-prev}) to give a visualization of the changes that occur as the parameter $K$ varies.

As can be seen, there is a minor plateau around the peak of the welfare curves $\frac{K - 1}{K} \throughput \approx 1 - \delta$, where there is a minimal penalty to the lower bound on $\frac{\text{APX}}{\text{OPT}}$ for decreasing unavailability $\delta$, followed by a sharp dropoff that occurs as the unavailability $\delta$ moves far below the welfare-optimal level. The slope of the plateau depends on which of the bounds is used, increasing as the bounds become stronger and move toward the Poisson curve. Furthermore, the slope of the plateau becomes smaller as we move toward high-supply environments. This suggests that a mechanism designer who is willing to sacrifice some of the welfare guarantee of a posted-price mechanism in exchange for lowering unavailability $\delta$ can do so at a very favorable tradeoff rate.

There are also implications for mechanism designers unconcerned with minimizing unavailability $\delta$, and only concerned with maximizing the worst-case welfare guarantee. The black line in \Cref{fig:welfare-big-vs-prev} and \Cref{fig:welfare-small-vs-prev} is the $\frac{\text{OPT}}{\text{APX}} \leq 1 + \sqrt{\frac{8 \ln(K)}{K}}$ bound obtained by \textcite{HKS-07}, which is intended to be used in a unit-demand setting with asymptotically high values of the real supply $K$. There is a gap between the best lower bounds obtained on $\frac{\text{APX}}{\text{OPT}}$ via \Cref{main_theorem} at the welfare-optimal selection of unavailability $\delta$ and this black line, which suggests that the constants involved for the $1 + \sqrt{\frac{8 \ln(K)}{K}}$ bound can be significantly improved at these lower values of supply $K$. Therefore, posted-price mechanisms can provide much stronger welfare guarantees at non-asymptotically large $K$ than the bounds from \textcite{HKS-07} permit.

\section{Transaction fee mechanism design}

We take a closer look at the application of our results to designing transaction fee mechanisms, which are used in blockchain technology to allocate limited space on a block.

As discussed in the introduction, space in a block is considered a valuable resource since each block can accommodate only so many units of data (or gas, in the case of Ethereum). A transaction fee mechanism is deployed to decide the inclusion of transactions in a block. Ethereum, for instance, runs a posted-price mechanism that burns all the payments collected from users \parencite{TR-20}. Conditioned on demand at the price being smaller than capacity of the block, the posted-price mechanism is strategy-proof for users and block proposers, while also deterring collusion between users and block proposers. However, when demand exceeds capacity, an emergency mechanism is deployed wherein the three incentive properties satisfied by the posted-price mechanism cannot all be satisfied simultaneously. The first price auction used by Ethereum as the emergency mechanism is not strategy-proof for users.

Having a high availability is desirable since it corresponds to not needing to frequently deploy the emergency mechanism. However, high availability trades off against throughput. With Bitcoin, where block capacity is an exogenous constraint and absolute throughput is a decision variable, high throughput corresponds to an efficient utilization of the valuable block space. Meanwhile with Ethereum, where absolute throughput is an exogenous constraint and block capacity is a decision variable, high throughput corresponds to low block capacity, and thereby, a smaller amount of information to process for block proposers in periods of high demand.

An ideal price-adjustment mechanism would like to directly control for availability. However, measuring availability is not straightforward since whether the demand exceeds the capacity of a block is binary; either the demand exceeds capacity, or it does not. Instead, price-posting heuristics use throughput and absolute throughput as an easily measurable lever to indirectly ensure high availability. Ethereum targets maintaining an average absolute throughput of 15 million gas, and achieves this through a heuristic that increases (decreases) the price if the ex-post absolute throughput of the previous block was more (less) than the 15 million gas target. Ethereum supports up to 30 million gas in each block, and thus, the heuristic translates to maintaining a 50\% throughput. Improved methods for managing this tradeoff have been proposed \parencite{FMPS-21}.

Our analysis of the shape of the tradeoff curve suggests a scope to reduce the block capacity (and thus, increase the throughput) without a significant decrease in availability\footnote{In practice, the likely effects on availability are less clear. In Ethereum, the emergency auction is triggered in around 2.3\% of blocks \parencite{CRS24, LLNZZZ-22}, corresponding to 97.7\% availability. This is lower than what our tradeoff curves would permit if throughput were fixed to $1/2$, and suggests that Ethereum's price adjustment to maintain its $1/2$ throughput target is not fast enough. When the price lags behind its ideal level, blocks may achieve a throughput either far above or below the $1/2$ target, and it is the average availability between these two cases which produces the 97.7\% figure.}. The target absolute throughput is 15 million gas, while 750,000 is a conservative upper bound on the maximum gas requested by any single transaction. Thus, the effective supply is at least $40$. Targeting a 50\% throughput would guarantee an availability of at least 99.9\%. On the other hand, for a reduced supply of 26.25 million gas (corresponding to an effective supply of 35), a target absolute throughput of 15 million gas would correspond to a 57\% throughput. Our bounds on availability, given this smaller effective supply and larger throughput target, would guarantee an availability of at least 99.7\%, which is only a minor decrease from the status quo.


For the parameters mentioned in the previous paragraph, we can also examine the case where block capacity is exogenously fixed and the absolute throughput is a decision variable (like in Bitcoin). \Cref{fig:ethereum-tradeoff} plots the availability vs throughput tradeoff for $\threshold = 40$, using an optimal ReLU bound from \Cref{main_theorem}. It is possible to select different points along the tradeoff curve using different throughput targets. For minor sacrifice in availability, a significant gain in throughput can be realized. For instance, a throughput of $60\%$ guarantees an availability of at least $99.66\%$, which translates to requiring the emergency auction approximately once every $300$ blocks. A throughput of $60\%$ would also guarantee approximately $60\%$ of the optimal welfare (\Cref{welfare_bound}). As with Ethereum, this is a only minor decrease in availability from the status quo.

\begin{figure}[h]
    \center
    \includegraphics[scale=0.6]{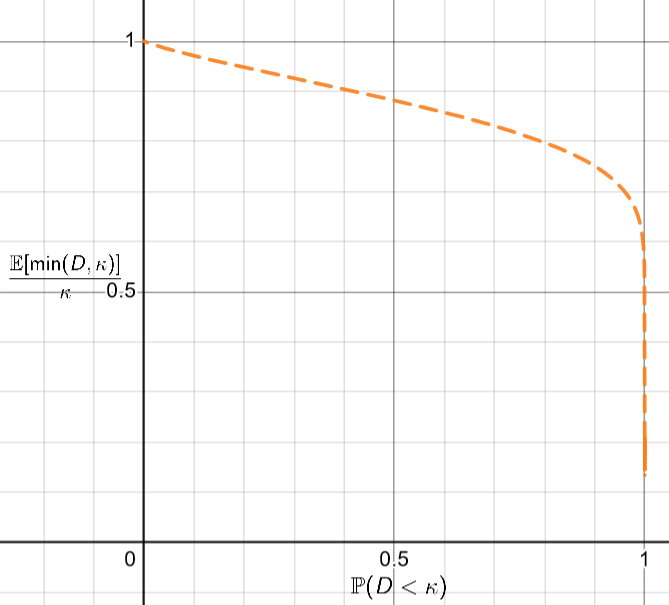}
    \caption{The optimal ReLU bound from \Cref{main_theorem} when $\threshold = 40$, illustrating the tradeoff curve between guarantees on throughput and availability for Ethereum. Throughput can be set to $60\%$ while maintaining a $99.66\%$ availability.}
    \label{fig:ethereum-tradeoff}
\end{figure}

\section{Conclusions and future work}
For posted-price mechanisms, we are able to significantly improve upon the lower bounds on expected welfare provided by a conventional Chernoff bound. Our bounds on throughput in terms of availability also improve on the analytical tractability of those provided by \textcite{CDL-23} without significantly sacrificing strength, and can be applied to settings with up-to-unit-demand agents.

We are able to show that the unavailability $\delta$ can be made much smaller than its usual value in prophet inequalities, with only a minor penalty to expected welfare. As can be seen from \Cref{fig:chernoff-comparison} and \Cref{fig:welfare-comparison-vs-prev}, several of our derived lower bounds on throughput $\throughput$ are nearly flat as a function of $\delta$ near the peak of the welfare curve. A nearly flat curve suggests that prices intentionally selected to be higher than their welfare-optimal value can significantly decrease unavailability $\delta$ with only a minor impact on throughput $\throughput$, and thus only a minor impact on expected welfare. A mechanism designer has good reason to make $\delta$ as small as possible, and so may wish to take advantage of this favorable tradeoff.

The bounds we derive can almost certainly be made tighter, and this is an important area for future work. \textcite{CDL-23} are able to demonstrate that Poisson-distributed demand produces the worst ratio of expected welfare achieved by a posted-price mechanism compared to a prophet in a unit-demand setting. In comparison, as can be seen from \Cref{fig:chernoff-comparison}, the throughput $\throughput$ corresponding to a true Poisson distribution for any given availability $\availability$ is above the strongest lower bounds on $\throughput$ available from \Cref{main_theorem}, via selecting an optimal ReLU function. The tradeoff curve  for Poisson distributions would need to be identical to the strongest bound from \Cref{main_theorem} for the proof techniques demonstrated in this paper to extend the results of \textcite{CDL-23} that Poisson demand is worst-case from the unit-demand setting to the up-to-unit-demand setting. Future work should seek to close the gap between the two tradeoff curves.

Another area for future work is to refine our availability-throughput bounds to better capture salient features of the demand distribution in Ethereum blocks. The concentration inequalities presented in this paper assume only that individual demands are independent and up-to-unit, and if stronger assumptions can be made on the individual demands' distributions, stronger concentration inequalities could be developed. For example, in Ethereum it is known that the gas requested by many transactions falls below their maximum possible value. Our analysis used a conservative upper bound on the gas requested by any single transaction of $750,000$. However, a standard transfer between two Ethereum wallets requires $21,000$ units of gas, which is an order of magnitude smaller. Since the variability in the gas requested by each transaction is high, the "effective supply" of $\threshold = 40$ that we selected is an underestimate; most blocks typically accommodate over $100$ transactions. A more refined analysis that takes into account these empirical facts could demonstrate an improved tradeoff between throughput and availability, by restricting attention to more realistic demand distributions rather than the worst-case demand distributions examined in this paper.

\printbibliography

\end{document}